\newcommand{\cv}[1]{} 
\newcommand{\av}[1]{#1} 
\date{}
\newtheorem{theorem}{Theorem}[section]
\newtheorem{fact}{Fact}[section]
\newtheorem{proposition}{Proposition}[section]
\def\subsection{\@startsection{subsection}{2}{0pt}{-12pt}{5pt}{\normalsize\bf}}
\tikzstyle{N}=[fill=white, draw=black, shape=circle, inner
\tikzstyle{C}=[fill=blue, draw=black, shape=circle, inner
\tikzstyle{L}=[]
\tikzstyle{R}=[red,thick]
\tikzstyle{B}=[black,semithick]
\tikzstyle{BD}=[black,semithick, dashed]
\tikzstyle{GD}=[blue,dashed]
\tikzstyle{var}=[fill=white, draw=black, shape=circle, inner
\tikzstyle{T}=[inner sep=1pt,minimum size=1pt]
\algnewcommand{\LineComment}[1]{\State \(\triangleright\) #1}
\newcommand{\tww}{\text{\normalfont\slshape tww}}
\newcommand{\lb}{\text{\normalfont\slshape lb}}
\newcommand{\ub}{\text{\normalfont\slshape ub}_{\mathrm{greedy}}}
\newcommand{\Prime}{\text{\normalfont\slshape prime}}
\newcommand{\Paley}[1]{\text{\normalfont Paley-}#1}
\newcommand{\pos}[1]{\phi_\prec(#1)}
\newcommand{\symdiff}{\mathop{\bigtriangleup}}
\newcommand{\Prec}{{\mathnormal{\prec}}}
\providecommand{\shortcite}[1]{\cite{#1}}
\newcommand{\citet}[1]{\citeauthor{#1} \shortcite{#1}}
\newcommand{\SB}{\{\,}%
\newcommand{\SM}{\;{\mid}\;}%
\newcommand{\SE}{\,\}}%
\newcommand{\Card}[1]{|#1|}
\let\phi=\varphi
\let\epsilon=\varepsilon 
\def\hy{\hbox{-}\nobreak\hskip0pt}
\begin{document}

\cv{
\title{\Large A SAT Approach to Twin-Width}
\author{Andr\'{e} Schidler\thanks{Algorithms and
    Complexity Group, TU Wien, Vienna, Austria. The authors acknowledge the support from the Austrian Science Fund (FWF), projects P32441 and W1255, and from the WWTF, project ICT19-065.} \and Stefan Szeider\footnotemark[1]}
}
\av{
\title{A SAT Approach to Twin-Width\thanks{The authors acknowledge the support from the Austrian Science Fund (FWF), projects P32441 and W1255, and from the WWTF, project ICT19-065.}}
\author{%
Andr\'{e} Schidler and Stefan Szeider\\[4pt]
\small Algorithms and Complexity Group\\[-3pt]
\small TU Wien, Vienna, Austria\\[-3pt] 
\small \texttt{\{aschidler,sz\}@ac.tuwien.ac.at}
}
}

\date{}

\maketitle

\cv{\fancyfoot[R]{\scriptsize{Copyright \textcopyright\ 2022 by SIAM\\
Unauthorized reproduction of this article is prohibited}}}



\begin{abstract}\cv{ \small
  \begin{spacing}{1}}
    The graph invariant twin-width  was recently introduced by
    Bonnet, Kim, Thomass\'e, and Watrigan. Problems expressible in
    first-order logic, which includes many prominent NP-hard problems,
    are tractable on graphs of bounded twin-width if a certificate for
    the twin-width bound is provided as an input. Computing such a
    certificate, however, is an intrinsic problem, for which no
    nontrivial algorithm is known.
 
    In this paper, we propose the first practical approach for
    computing the twin-width of graphs together with the corresponding
    certificate. We propose efficient SAT-encodings that rely on a
    characterization of twin-width based on elimination sequences.
    This allows us to determine the twin-width of many famous graphs
    with previously unknown twin-width.  We utilize our encodings to
    identify the smallest graphs for a given twin-width bound
    $d \in \{1,\dots,4\}$.  \cv{ \end{spacing}}
\end{abstract}

\section{Introduction}
Twin-width is a new graph invariant that was recently introduced by
Bonnet \emph{et al.}~\shortcite{BonnetEtal(2)21,BonnetEtal(3)20,BonnetEtal(1)20}, inspired
by previous work by Guillemot and Marx~\cite{GuillemotMarx14}.  Graph classes of bounded
twin-width admit the fixed-parameter tractability of First-Order (FO)
model checking, parameterized by the length of the FO formula,
provided a witness for bounded twin-width is given.  Many NP-hard
problems such as as ``does the input graph contain an independent set
of size at least $r$?'' or ``does the input graph contain a subgraph
that is isomorphic to a fixed pattern graph $H$?''  can be naturally
expressed as FO model checking.  Graph classes of bounded twin-width
subsume and generalize several dense graph classes for which FO model
checking is fixed-parameter tractable, including map graphs, bounded
rank-width graphs, bounded clique-width graphs, cographs, and unit
interval graphs.  Thus, twin-width boundedness plays a similar role
for dense graph classes as \emph{nowhere density} plays for sparse
graph classes~\cite{GroheKreutzerSiebertz17}.

Bonnet \emph{et al.}'s \shortcite{BonnetEtal(1)20} FO model checking
algorithm for graphs of bounded twin-width requires a certificate that
the input graph's twin-width is bounded by a constant $d$.  The most
pressing open theoretical question regarding twin-width concerns the
complexity of computing such a certificate, and more generally,
recognize graphs of twin-width $\leq d$~\cite{BonnetEtal(1)20}. There
are no practical algorithms known to compute the twin-width of a graph
exactly or approximately.

\subsection{Contribution}
 In this paper, we take a SAT-based  approach to the exact computation of
twin-width. We thereby utilize the power of SAT solving (solving the
propositional satisfiability problem SAT) for a
combinatorial problem, continuing a compelling and successful line of
research~\cite{CodishFrankItzhakovMiller16,CodishMillerProsserStuckey19,HeuleSzeider15,Heule18,HeuleKullmann17,PeitlSzeider21b,SchidlerSzeider21b}. As a result, we can identify the exact twin-width
of many graphs for which the twin-width was previously unknown.  

More specifically, we propose two SAT encodings that take a graph $G$
and an integer $d$ as input, and produce a propositional CNF formula
$F(G,d)$, which is satisfiable if and only if the twin-width of $G$ is
at most~$d$. By running a SAT-solver on $F(G,d)$ for different values
of~$d$, we can determine the exact twin-width of~$G$. We propose
methods for computing lower and upper bounds for $d$ that allow us to
reduce the interval of possible values of $d$ for running the SAT
solver on. Both encodings are based on a new characterization of
twin-width in terms of elimination orderings, which are somewhat
related to SAT encodings used for other width
measures~\cite{GanianLodhaOrdyniakSzeider19,SamerVeith09,SchidlerSzeider20}. However,
for twin-width, the situation is more involved, because it is not
sufficient to globally bound certain static values (like out-degrees
in an elimination ordering for treewidth~\cite{SamerVeith09}).

We demonstrate the potentials and limits of our encodings by utilizing
them in the following three computational experiments.

\begin{enumerate}
\item \emph{Twin-width of small Random Graphs}.  We determine
  experimentally how the twin-width of a random graph depends on its
  density. As one expects, the  twin-width is small for dense and
  sparse graphs. Graphs of  edge-probability 0.5 have the highest
  twin-width.

\item  \emph{Twin-width of Famous Named Graphs}. Over many decades of
  research in combinatorics, researchers have collected several
  special graphs, which have been used as counterexamples for
  conjectures or for showing the tightness of combinatorial
  results. We considered several of such special graphs from the
  literature and computed their exact twin-width. We believe
  that these results will be of interest to people working in
  combinatorics. This way, we have identified a certain class of
  strongly regular graphs (Paley graphs) that provide high lower
  bounds for twin-width.

\item \emph{Twin-Width Numbers}. In general, it is not known how many
  vertices are required to form a graph of a certain twin-width. In
  fact, there is limited knowledge on lower-bound techniques for
  twin-width.  We use our SAT encoding together with a graph generator
  to identify the smallest graphs of twin-width $1,2,3,4$, and provide
  tight bounds for twin-width $5$ and $6$. This way, we can determine
  the first few twin-width numbers, where the $d$-th twin-width number
  is the smallest number of vertices of a graph with twin-width~$d$.
  A similar computation has been conducted for
  clique-width~\cite{HeuleSzeider15}.  Interestingly, up to
  isomorphism, there are unique smallest graphs of twin-width 1, 2,
  and 4, respectively, and there are five such graphs for
  twin-width~3.
\end{enumerate}

\section{Twin-width}\label{section:twin-width}

A \emph{trigraph} is an undirected graph $G$ with vertex set $V(G)$ whose
edge set $E(G)$ is partitioned into a set $B(G)$ of black edges and a
set $R(G)$ of red edges. We consider an ordinary graph as a trigraph
with all its edges being black.
The set $N_G(v)$ of neighbors of a vertex $v$ in a trigraph~$G$
consists of all the vertices adjacent to $v$ by a black or red edge.
We call $u\in N_G(v)$ a \emph{black neighbor} of $v$ if $uv\in B(G)$
and we call it a \emph{red neighbor} if $uv\in R(G)$.
The \emph{red degree} of a vertex $v\in V(G)$ of a trigraph $G$ is the number of
its red neighbors.  A $d$-trigraph is a trigraph where each vertex has
red degree at most $d$.

\subsection{Twin-Width via Sequences of $d$-Contractions}
We give the original definition of twin-width~\cite{BonnetEtal(2)21,BonnetEtal(3)20,BonnetEtal(1)20}.

A trigraph $G'$ is obtained from a trigraph
$G$ by \emph{contraction}: two (not-necessarily adjacent)
vertices $u$ and $v$ are merged into a single vertex $w$, and the edges
of $G$ are updated as follows:
Every vertex in the \emph{symmetric difference} $N_G(u) \symdiff N_G(v)$ is
made a red neighbor of  $w$.
If a vertex $x\in N_G(u) \cap N_G(v)$ is a black neighbor of both $u$ and $v$, then $w$ is made
a black neighbor of $x$; otherwise, $w$ is made a red neighbor of $x$.
The other edges  (not incident with $u$ or~$v$) remain unchanged.

\sloppypar A \emph{sequence of $d$-contractions} or \emph{$d$-sequence} for a
graph~$G$ is a sequence of $d$-trigraphs
$G_0$, $G_1, \dots, G_{n-1}$ where $G_0 = G$, $G_{n-1}$ is the graph on a
single vertex, and $G_{i}$ for $i\geq 1$  is obtained from $G_{i-1}$ by contraction.
We observe that $\Card{V(G_i)}=n-i$ for $0\leq i < n =\Card{V(G)}$.
The \emph{twin-width} of a trigraph $G$, denoted $\tww(G)$, is the
smallest integer $d$ such that $G$ admits a $d$-sequence.

It is indeed sometimes  necessary to contract
non-adjacent vertices. For instance, 
Figure~\ref{fig:WagnerContractions} shows a sequence of 2-contractions
for the Wagner graph.  Without contracting non-adjacent vertices,  a vertex of red degree
$>2$ would be created by the first contraction since each vertex has degree~3 and shares no neighbor with any of its neighbors.

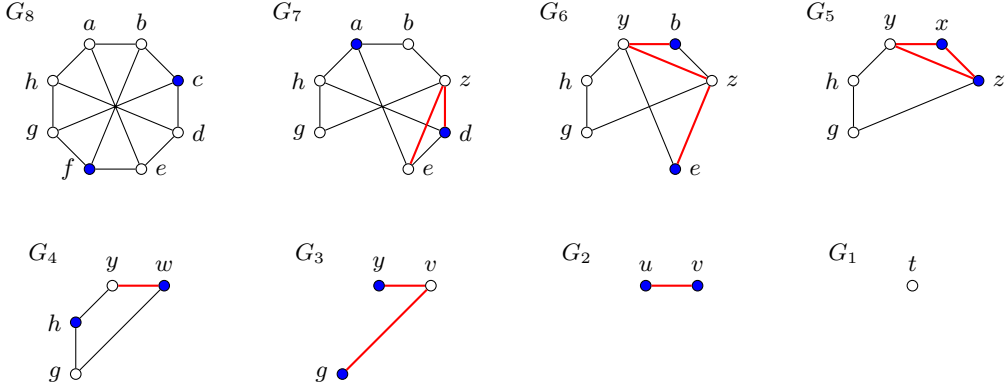
\begin{figure}[tbh]
  \centering
  \tikzset{
    lead/.style={
        execute at end picture={
            \coordinate (lower right) at (current bounding box.south east);
            \coordinate (upper left) at (current bounding box.north west);
        }
    },
    follow/.style={
        execute at end picture={
            \pgfresetboundingbox
            \path (upper left) rectangle (lower right);
        }
    }
}

\begin{tikzpicture}[scale=0.9, lead]\small 
    \draw
    (-1.4,1.4) node (number) {$G_8$}
    (0,0) node[coordinate] (origin) {}
    +(-5.5*360/8: 1cm) node[N, label={$a$}] (1) {}
    +(-6.5*360/8: 1cm) node[N, label={$b$}] (2) {}
    +(-7.5*360/8: 1cm) node[C, label=right:{$c$}] (3) {}
    +(-0.5*360/8: 1cm) node[N, label=right:{$d$}] (4) {}
    +(-1.5*360/8: 1cm) node[N, label=right:{$\strut e$}] (5) {}
    +(-2.5*360/8: 1cm) node[C, label=left:{$f$}] (6) {}
    +(-3.5*360/8: 1cm) node[N, label=left:{$g$}] (7) {}
    +(-4.5*360/8: 1cm) node[N, label=left:{$h$}] (8) {}
    ;
    \draw 
    (1)--(2)--(3)--(4)--(5)--(6)--(7)--(8)--(1);
    \draw (1)--(5) (2)--(6) (3)--(7) (4)--(8);
\end{tikzpicture}\hspace{6mm}
\begin{tikzpicture}[scale=0.9, lead]\small 
  \draw
  (-1.4,1.4) node (number) {$G_7$}
  (0,1.8) node[coordinate] (lowpoint) {}
  (0,0) node[coordinate] (origin) {}
  +(-5.5*360/8: 1cm) node[C, label={$a$}] (1) {}
  +(-6.5*360/8: 1cm) node[N, label={$b$}] (2) {}
  +(-7.5*360/8: 1cm) node[N, label=right:{$z$}] (36) {}
  +(-0.5*360/8: 1cm) node[C, label=right:{$d$}] (4) {}
  +(-1.5*360/8: 1cm) node[N, label=right:{$\strut e$}] (5) {}
  +(-3.5*360/8: 1cm) node[N, label=left:{$g$}] (7) {}
  +(-4.5*360/8: 1cm) node[N, label=left:{$h$}] (8) {}
  ;
  \draw (1)--(2) (4)--(5) (7)--(8)--(1)--(5)  (4)--(8);
  \draw (36)--(2) (36)--(7);
  \draw[R] (36)--(5) (36)--(4); 
\end{tikzpicture}\hspace{6mm}
\begin{tikzpicture}[scale=0.9, lead]\small 
  \draw
  (-1.4,1.4) node (number) {$G_6$}
  (0,1.8) node[coordinate] (lowpoint) {}
  (0,0) node[coordinate] (origin) {}
  +(-5.5*360/8: 1cm) node[N, label={$y$}] (14) {}
  +(-6.5*360/8: 1cm) node[C, label={$b$}] (2) {}
  +(-7.5*360/8: 1cm) node[N, label=right:{$z$}] (36) {}
  +(-1.5*360/8: 1cm) node[C, label=right:{$\strut e$}] (5) {}
  +(-3.5*360/8: 1cm) node[N, label=left:{$g$}] (7) {}
  +(-4.5*360/8: 1cm) node[N, label=left:{$h$}] (8) {}
  ;
  \draw  (7)--(8)  (36)--(2) (36)--(7);
  \draw  (8)--(14)--(5);
  \draw[R] (2)--(14)--(36)--(5); 
\end{tikzpicture}\hspace{6mm}
\begin{tikzpicture}[scale=0.9, follow]\small 
  \draw (-1.4,1.4) node (number) {$G_5$}
  (0,1.8) node[coordinate] (lowpoint) {}
  (0,0) node[coordinate] (origin) {}
  +(-5.5*360/8: 1cm) node[N, label={$y$}] (14) {}
  +(-6.5*360/8: 1cm) node[C, label={$x$}] (25) {}
  +(-7.5*360/8: 1cm) node[C, label=right:{$z$}] (36) {}
  +(-3.5*360/8: 1cm) node[N, label=left:{$g$}] () {}
  +(-4.5*360/8: 1cm) node[N, label=left:{$h$}] (8) {}
  ;
  \draw  (7)--(8)   (36)--(7);
  \draw  (8)--(14);
  \draw[R] (25)--(36)--(14)--(25); 
\end{tikzpicture}\hspace{6mm}\av{

  \bigskip
  }
\begin{tikzpicture}[scale=0.9, follow]\small 
  \draw      (-1.4,1.4) node (number) {$G_4$}
  (0,1.8) node[coordinate] (lowpoint) {}
  (0,0) node[coordinate] (origin) {}
  +(-5.5*360/8: 1cm) node[N, label={$y$}] (14) {}
  +(-6.5*360/8: 1cm) node[C, label={$w$}] (2356) {}
  +(-3.5*360/8: 1cm) node[N, label=left:{$g$}] (7) {}
  +(-4.5*360/8: 1cm) node[C, label=left:{$h$}] (8) {}
  ;
  \draw  (14)--(8)--(7)--(2356);  
  \draw[R] (14)--(2356);
\end{tikzpicture}\hspace{6mm}
\begin{tikzpicture}[scale=0.9, follow]\small 
  \draw      (-1.4,1.4) node (number) {$G_3$}
  (0,1.8) node[coordinate] (lowpoint) {}
  (0,0) node[coordinate] (origin) {}
  +(-5.5*360/8: 1cm) node[C, label={$y$}] (14) {}
  +(-6.5*360/8: 1cm) node[N, label={$v$}] (23568) {}
  +(-3.5*360/8: 1cm) node[C, label=left:{$g$}] (7) {}
  ;
  \draw[R] (14)--(23568)--(7);
\end{tikzpicture}\hspace{6mm}
\begin{tikzpicture}[scale=0.9, follow]\small 
  \draw      (-1.4,1.4) node (number) {$G_2$}
  (0,0) node[coordinate] (origin) {}
  +(-5.5*360/8: 1cm) node[C, label={$u$}] (147) {}
  +(-6.5*360/8: 1cm) node[C, label={$v$}] (23568) {}
  ;
  \draw[R] (147)--(23568);
\end{tikzpicture}\hspace{6mm}
\begin{tikzpicture}[scale=0.9, follow]\small 
  \draw      (-1.4,1.4) node (number) {$G_1$}
  (0,0) node[coordinate] (origin) {}
  +(-5.5*360/8: 1cm) node[N, label={$t$}] (12345678) {}
  ;
\end{tikzpicture}%
\caption{A sequence of 2-contractions for the Wagner graph. Vertices that
  will be contracted next are marked blue.}
  \label{fig:WagnerContractions}
\end{figure}

We state here some basic properties of twin-width, observed in the
original paper~\cite{BonnetEtal(1)20}.

\begin{fact}\label{fact:subgraph}
  If $G'$ is and induced  subgraph of a graph $G$, then
  $\tww(G')\leq \tww(G)$. 
\end{fact}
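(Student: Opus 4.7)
The plan is to take any optimal contraction sequence for $G$ and project it down to a valid contraction sequence for $G'$ with no larger red degree. Let $d = \tww(G)$ and fix a $d$-sequence $G = G_0, G_1, \dots, G_{n-1}$. Throughout, I would identify each vertex $w$ of $G_i$ with the set $S_w \subseteq V(G)$ of original vertices that have been merged into $w$, so that $\{S_w : w \in V(G_i)\}$ is a partition of $V(G)$.

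Next I would define the \emph{restriction} $G_i'$ of $G_i$ to $G'$: keep exactly those $w \in V(G_i)$ whose trace $S_w \cap V(G')$ is nonempty, and on these vertices install the edges that the contraction rule would produce if we started from $G'$ instead of $G$. Concretely, for two such vertices $w_1, w_2$ declare the pair to be a black edge if every $u_1 \in S_{w_1} \cap V(G')$ and $u_2 \in S_{w_2} \cap V(G')$ are adjacent in $G'$, a non-edge if none of them are adjacent, and a red edge otherwise. Since $G'$ is induced, adjacency in $G'$ coincides with adjacency in $G$ on $V(G')$, so whenever the pair is red in $G_i'$ the sets $S_{w_1}, S_{w_2}$ already witness mixed adjacency in $G$, hence the pair is also red in $G_i$.

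I would then verify that $G'_0, G'_1, \dots, G'_{n-1}$ (after deleting consecutive duplicates) is a contraction sequence for $G'$ ending in the single-vertex graph. The only steps of the original sequence that change the restriction are contractions of $u, v \in V(G_i)$ where both $S_u \cap V(G')$ and $S_v \cap V(G')$ are nonempty; for such a step one checks directly that $G'_{i+1}$ is obtained from $G'_i$ by contracting $u$ and $v$, because the partition trace and the color rule behave functorially under intersection with $V(G')$. The other steps either leave the restriction untouched or just rename a vertex.

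Finally, the red-degree bound is immediate from the observation in the second paragraph: every red edge incident to $w$ in $G'_i$ lifts to a red edge incident to $w$ in $G_i$, so the red degree of $w$ in $G_i'$ is at most its red degree in $G_i$, which is at most $d$. Hence $G'$ admits a $d$-sequence and $\tww(G') \leq d = \tww(G)$. The only mildly delicate point I expect is the bookkeeping in the third paragraph, namely confirming that the "restricted" color rule really is preserved by a single contraction step, but this reduces to comparing symmetric differences and intersections of the neighborhood sets $N_G(u) \cap V(G')$ and $N_G(v) \cap V(G')$, which is a routine set-theoretic check.
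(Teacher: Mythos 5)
Your proposal is correct and follows the standard argument for monotonicity of twin-width under induced subgraphs: identify each vertex of the contracted trigraph $G_i$ with the part $S_w$ of $V(G)$ that was merged into it, observe that the colour of a pair $w_1w_2$ in $G_i$ is black, red, or absent exactly according to whether $S_{w_1}\times S_{w_2}$ is fully adjacent, mixed, or fully non-adjacent in $G$, and then restrict the partition to $V(G')$. Since the paper does not actually supply a proof of Fact~\ref{fact:subgraph} (it cites Bonnet et al.'s original twin-width paper), there is nothing in the text to compare against, but your projection argument is the one used there; the only thing I would flag is that your sketch silently relies on the trace characterisation of the trigraph colours (that red in $G_i$ is equivalent to mixed adjacency between $S_{w_1}$ and $S_{w_2}$ in $G$), which is a short but genuine induction you should state explicitly rather than fold into the phrase ``routine set-theoretic check.''
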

 
For a graph $G$, we denote by $\overline{G}$ its \emph{complement
  graph}, which is defined by $V(\overline{G})=V(G)$ and
$E(\overline G)=\SB uv \SM u,v\in V(G), uv \notin E(G), u \neq v \SE$.
\begin{fact}\label{fact:complement}
  For every graph $G$, we have $\tww(G)=\tww(\overline{G})$.
\end{fact}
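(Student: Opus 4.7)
The plan is to lift the complement operation from ordinary graphs to trigraphs in a way that is compatible with contraction, and then transport any $d$-sequence for $G$ to a $d$-sequence for $\overline{G}$ term-by-term. Concretely, I would define, for a trigraph $H$, a complementary trigraph $\overline{H}$ on the same vertex set by keeping the red edge set unchanged, $R(\overline{H}) = R(H)$, and swapping black edges with non-edges among the pairs that are not red; equivalently, $uv \in B(\overline{H})$ iff $uv$ is neither a red nor a black edge of $H$. On an ordinary graph (viewed as a trigraph with only black edges) this coincides with the usual complement. Since red edges are preserved, the red degree of every vertex is the same in $H$ and in $\overline{H}$; in particular, $H$ is a $d$-trigraph iff $\overline{H}$ is.

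The key step is the commutation lemma: if $H'$ is obtained from $H$ by contracting $u,v$ into $w$, then $\overline{H'}$ is obtained from $\overline{H}$ by contracting the same pair $u,v$ into $w$. I would prove this by a short case analysis on an arbitrary third vertex $x$, tracking whether $xu$ and $xv$ are black, red, or non-edges in $H$ (nine cases, reducible by the symmetry $u \leftrightarrow v$). In each case one checks that the color of $xw$ produced by contracting in $H$ and then complementing agrees with the color produced by complementing first and then contracting. The guiding observation that makes every case work is that $xw$ is a non-edge after contracting in $H$ exactly when $x$ is a black neighbor of both $u$ and $v$, which is precisely the condition for $xw$ to be a black edge after contracting in $\overline{H}$; the symmetric statement covers the opposite roles; and the remaining cases all produce a red edge on both sides because red edges survive under complementation.

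Once the commutation is established, the theorem follows immediately. Given a $d$-sequence $G_0, G_1, \dots, G_{n-1}$ for $G$, the sequence $\overline{G_0}, \overline{G_1}, \dots, \overline{G_{n-1}}$ is a valid contraction sequence of $\overline{G}$ by induction using the commutation lemma, and it consists of $d$-trigraphs because red degrees are preserved by the complement. Hence $\tww(\overline{G}) \leq \tww(G)$. Applying the same argument to $\overline{G}$ and using $\overline{\overline{G}} = G$ gives the reverse inequality, so $\tww(G) = \tww(\overline{G})$.

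The main obstacle is purely bookkeeping in the commutation lemma: one has to be careful that the trigraph complement does \emph{not} flip red edges, only black edges and non-edges, since otherwise the red-degree bound would not be preserved. With that convention fixed, the case analysis is routine and the inductive transfer of the sequence is immediate.
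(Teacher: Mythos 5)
The paper states this as a Fact with a citation to the original twin-width papers and offers no proof of its own, so there is no in-paper argument to compare against; your proposal supplies one, and it is correct. The route you take is the natural one: extend complementation to trigraphs by fixing red edges and exchanging black edges with non-edges, note that this operation preserves red degrees and is an involution, prove that it commutes with contraction, and then push a $d$-sequence for $G$ through the involution term by term to get a $d$-sequence for $\overline{G}$ of the same width; the reverse inequality follows by applying the same argument to $\overline{G}$. I checked the nine-case commutation claim against the contraction rules and it holds: $(\mathrm{B},\mathrm{B})$ in $H$ gives a black $xw$, which complements to a non-edge, matching the contraction of $(\mathrm{N},\mathrm{N})$ in $\overline H$; $(\mathrm{N},\mathrm{N})$ in $H$ gives a non-edge, complementing to black, matching $(\mathrm{B},\mathrm{B})$ in $\overline H$; every other pair yields a red $xw$ on both sides. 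One small slip in your ``guiding observation'' sentence: ``$xw$ is a non-edge after contracting in $H$ exactly when $x$ is a black neighbor of both $u$ and $v$'' is wrong if the neighbor relation is read in $H$ (there the condition is that $x$ is a non-neighbor of both); the intended and correct reading is that $x$ is a black neighbor of both $u$ and $v$ in $\overline H$. This is a phrasing issue only and does not affect the argument, since the case analysis you actually invoke is sound.
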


\subsection{Twin-Width via $d$-Elimination Sequences}\label{sec:elim}

Next we give an alternative definition of twin-width which is better
suited for formulating our  SAT encodings.

Let $G$ be a graph, $T$ a tree with $V(T)=V(G)$, rooted at some
vertex $r_T$, and $\prec$ a linear ordering of $V(T)$,
where $u \prec v$ for two vertices $u, v\in V(T)$ such that $v$ is the parent of $u$ in $T$. 
We call $T$ a \emph{contraction tree}, $\prec$ an \emph{elimination ordering},
and the pair $(T,\Prec)$ a \emph{twin-width decomposition} of $G$.
Thus, when we write $V(G)=\{v_1,\dots,v_n\}$ such that  $v_1 \prec \dots
\prec v_n$ and $v_n = r_T$, then 
$T$ and $G$ define a sequence of graphs $H_0,\dots,H_{n-1}$
with $V(H_i)=\{v_{i+1},\dots,v_{n}\}$.
We denote by $p_{i}$ the parent of $v_{i}$ in $T$. By
definition, $v_{i} \prec p_{i}$.

We define the edge set $E(H_{i})$ recursively as follows.  For $i=0$,
we set $E(H_0)=\emptyset$, and for $1 \leq i < n$, we set
\begin{subequations}
\begin{align}
   &E(H_i)
  =\SB uv\in E(H_{i-1}) \SM u,v \in V(H_i) \SE \label{eq:keep}\\
  &~~\av{~~~~~~~~~~~~~~~}\cup \; \SB up_{i} \SM  v_{i}u \in E(H_{i-1}) \SE \label{eq:transfer}\\
  &~~\av{~~~~~~~~~~~~~~~}\cup \; \SB up_{i} \SM  v_{i}u \in E(G), p_{i}u \notin E(G), u \in V(H_i)  \label{eq:create-source}\SE
       \\
  &~~\av{~~~~~~~~~~~~~~~}\cup \; \SB up_{i} \SM  v_{i}u \notin E(G), p_{i}u \in E(G), u \in V(H_i)
      \SE \label{eq:create-target}.
\end{align}
\label{eqn:all-lines}
\end{subequations}
%
%
\noindent We call the sequence $H_0,\dots,H_{n-1}$ the
\emph{elimination sequence} for $G$ defined by the twin-width
decomposition $(T,\Prec)$; if for an integer $d$, all the
$H_i$ have a maximum degree $\leq d$, we call $H_0,\dots,H_{n-1}$ a
$d$\hy elimination sequence. The \emph{width} of the twin-width
decomposition $(T,\Prec)$ of $G$ is the smallest integer $d$ such
that $(T,\Prec)$ defines a $d$\hy elimination sequence.

Figure~\ref{fig:WagnerELO} shows an example of a 2-elimination
sequence, and in Figure~\ref{fig:wagnerTree} the same elimination
sequence is superimposed on the graph.

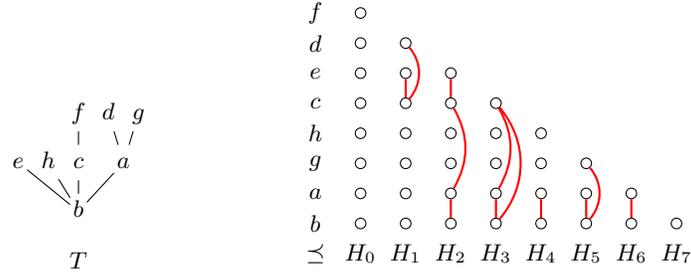
\begin{figure}[tbh]
  \centering
    \begin{tikzpicture}[xscale=0.4,yscale=0.65 ]\small 
  \draw
  (0,-1) node[T] (t) {$T$}
  (0,0) node[T] (b) {$\strut b$}
  (0,1) node[T] (c) {$\strut c$}
  (0,2) node[T] (f) {$\strut f$}
   (-1,1) node[T] (h) {$\strut h$}
  (-2,1) node[T] (e) {$\strut e$}  
  (1.5,1) node[T] (a) {$\strut a$}
  (1,2) node[T] (d) {$\strut d$}  
  (2,2) node[T] (g) {$\strut g$}
  ;
  \draw (a)--(b)--(c)--(f) (e)--(b)--(h) (d)--(a)--(g);
\end{tikzpicture}%
\hspace{6mm}\av{\hspace{14mm}}%
  \begin{tikzpicture}[scale=0.4]\small 
  \begin{scope}[xshift=-15mm]
    \draw
    (0,0) node (3) {$f$}
    (0,-1) node (1) {$d$}
    (0,-2) node (2) {$e$}

    (0,-3) node (5) {$c$}
    (0,-4)  node (8) {$h$}
    (0,-5) node (6) {$g$}
    (0,-6)  node (4) {$a$}
    (0,-7) node (7) {$b$}
    (0,-8) node (h) {$\preceq$}
    ;
  \end{scope}
  \draw
  (0,0) node[N] (3) {}
  (0,-1) node[N] (1) {}
  (0,-2) node[N] (2) {}
  (0,-3) node[N] (5) {}
  (0,-4)  node[N] (8) {}
  (0,-5) node[N] (6) {}
  (0,-6)  node[N] (4) {}
  (0,-7) node[N] (7) {}
  (0,-8) node (7) {$H_0$}
  ;
  \begin{scope}[xshift=15mm]
    \draw
    (0,-1) node[N] (1) {}
    (0,-2) node[N] (2) {}
    (0,-3) node[N] (5) {}
    (0,-4)  node[N] (8) {}
    (0,-5) node[N] (6) {}
    (0,-6)  node[N] (4) {}
    (0,-7) node[N] (7) {}
    (0,-8) node (h) {$H_1$}
    ;
    \draw[R] (1) edge [bend left=40] (5);
    \draw[R] (2)--(5); 
  \end{scope}

  \begin{scope}[xshift=30mm]
    \draw
    (0,-2) node[N] (2) {}
    (0,-3) node[N] (5) {}
    (0,-4)  node[N] (8) {}
    (0,-5) node[N] (6) {}
    (0,-6)  node[N] (4) {}
    (0,-7) node[N] (7) {}
    (0,-8) node (h) {$H_2$}
    ;

    \draw[R] (2) -- (5);
    \draw[R] (5) edge [bend left=30] (4);
    \draw[R] (4)--(7);
  \end{scope}

  \begin{scope}[xshift=45mm]   
    \draw
    (0,-3) node[N] (5) {}
    (0,-4)  node[N] (8) {}
    (0,-5) node[N] (6) {}
    (0,-6)  node[N] (4) {}
    (0,-7) node[N] (7) {}
    (0,-8) node (h) {$H_3$};

    \draw[R] (5) edge [bend left=30] (4);
    \draw[R] (5) edge [bend left=40] (7);
    \draw[R] (4) -- (7);

  \end{scope}
  \begin{scope}[xshift=60mm]   
    \draw
    (0,-4)  node[N] (8) {}
    (0,-5) node[N] (6) {}
    (0,-6)  node[N] (4) {}
    (0,-7) node[N] (7) {}
    (0,-8) node (h) {$H_4$}
    ;
    \draw[R] (7)--(4);
  \end{scope}
  \begin{scope}[xshift=75mm]   
    \draw
    (0,-5) node[N] (6) {}
    (0,-6)  node[N] (4) {}
    (0,-7) node[N] (7) {}
    (0,-8) node (h) {$H_5$}
    ;

    \draw[R] (7)--(4);
    \draw[R] (6) edge [bend left=40] (7);
  \end{scope}
  \begin{scope}[xshift=90mm]   
    \draw
    (0,-6)  node[N] (4) {}
    (0,-7) node[N] (7) {}
    (0,-8) node (h) {$H_6$}
    ;
    \draw[R] (4)--(7);
  \end{scope}
  \begin{scope}[xshift=105mm]   
    \draw
    (0,-7) node[N] (7) {}
    (0,-8) node (h) {$H_7$}
    ;
  \end{scope}
\end{tikzpicture}%

    \caption{A 2-elimination sequence for the Wagner graph, defined by
      the linear ordering $\prec$ and the contraction tree
      $T$. This is the 2-elimination sequence that we get by
        applying the construction from the proof of
        Theorem~\ref{the:leo} to the sequence of 2-contractions shown
        in Figure~\ref{fig:WagnerContractions}.}
\label{fig:WagnerELO}
\end{figure}

\begin{figure}[tbh]
\centering
  \begin{tikzpicture}[scale=1]\small 
    \draw
    (0,0) node[coordinate] (origin) {}
    +(-5.5*360/8: 1cm) node[N, label={$7$}] (7) {}
    +(-6.5*360/8: 1cm) node[N, label={$8$}] (8) {}
    +(-7.5*360/8: 1cm) node[N, label=right:{$4$}] (4) {}
    +(-0.5*360/8: 1cm) node[N, label=right:{$2$}] (2) {}
    +(-1.5*360/8: 1cm) node[N, label=below:{$3$}] (3) {}
    +(-2.5*360/8: 1cm) node[N, label=below:{$1$}] (1) {}
    +(-3.5*360/8: 1cm) node[N, label=left:{$6$}] (6) {}
    +(-4.5*360/8: 1cm) node[N, label=left:{$5$}] (5) {}
    ;
    \draw[style=GD] (1)--(4) (2)--(7) (6)--(7) (5)--(8) (3)--(8);
    \draw[style=BD] (4)--(8)--(7);
    \draw[style=B]  (4)--(2)--(3)--(1)--(6)--(5)--(7)
    (2)--(5) (4)--(6) (7)--(3) (8)--(1);
  \end{tikzpicture}
  \caption{The Wagner graph with linear ordering $\prec$ from
    Figure~\ref{fig:WagnerELO} indicated by index numbers. The
    contraction tree~$T$ is superimposed on the graph, where blue
    dashed edges indicate tree edges that are not shared with the
    graph, and black dashed edges indicate tree edges that are shared
    with the graph.}
  \label{fig:wagnerTree}
\end{figure}
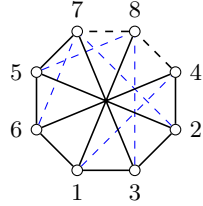

\begin{theorem}\label{the:leo}
  Let $G$ be a graph and $<$ an arbitrary linear ordering of $V(G)$.  $G$
  has twin-width $\leq d$ if and only if there exists a twin-width
  decomposition $(T,\Prec)$ of width $\leq d$ such that
  \begin{enumerate}
  \item if $x$ is the parent of $y$ in $T$, then
    $x < y$;
  \item the root of $T$ is the $<$-maximal element of  $V(G)$.
  \end{enumerate}
\end{theorem}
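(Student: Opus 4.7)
The approach is to prove the equivalence of twin-width's two definitions---via $d$-contractions and via $d$-elimination sequences---in a way that leaves enough freedom to respect any prescribed linear order $<$. The $(\Leftarrow)$ direction is essentially routine: starting from a decomposition $(T,\Prec)$ of width $\leq d$ satisfying~(1) and~(2), I read off a contraction sequence by processing vertices in $\Prec$-order, contracting the class currently containing $v_i$ with the class containing its $T$-parent $p_i$. I identify each $w\in V(H_i)$ with the class consisting of $w$ together with all of its $T$-descendants of $\Prec$-index at most $i$, and then use a straightforward induction on $i$ to show that the edges of $H_i$ are exactly the red edges of the trigraph $G_i$ after $i$ contractions. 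The four rules defining $E(H_i)$ align case-by-case with whether $v_i u$ and $p_i u$ were previously red; the only interesting case is when both were black, in which case the uniform adjacency of each class to the $u$-class is correctly encoded by the adjacency of its representative in $G$, so~\eqref{eq:create-source} and~\eqref{eq:create-target} detect the resulting disagreement precisely.

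For the $(\Rightarrow)$ direction I fix an arbitrary $<$ and view a $d$-sequence of contractions as a rooted binary \emph{merge tree} $M$ whose $n$ leaves are $V(G)$ and whose $n-1$ internal nodes are the individual merges. For every node $m$ of $M$ I set $\mathrm{rep}(m)$ to be the $<$-maximum leaf in the $M$-subtree below $m$. I then build $T$ on $V(G)$ by declaring, for each $v\neq\max_{<} V(G)$, the $T$-parent of $v$ to be $\mathrm{rep}(m_v)$, where $m_v$ is the lowest $M$-ancestor of $v$ at which $v$ is not the representative. By construction the root of $T$ is $\max_{<} V(G)$, establishing~(2); and whenever $x$ is the $T$-parent of $y$ we have $x = \mathrm{rep}(m_y) > y$ in $<$, establishing~(1) (read as ``$y < x$''). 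For $\Prec$ I use the order in which the representatives of the non-dominant children of $M$'s internal nodes are eliminated as the merges of $M$ are performed in the original sequence's order, with $\max_{<} V(G)$ appended last; this $\Prec$ is automatically a linear extension of the $T$-parent relation, so $(T,\Prec)$ is a valid twin-width decomposition.

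The principal obstacle is to show that the decomposition produced by the forward construction has width at most $d$. The plan is an induction on $i$ establishing two coupled invariants: (i) the partition of $V(G)$ visited by the elimination sequence at step $i$ coincides with the partition reached after the first $i$ merges of the given contraction sequence; and (ii) the edges of $H_i$ are exactly the red edges between the corresponding class representatives in the trigraph $G_i$. Invariant~(i) reduces to the identity that $u$ is a $T$-descendant of $v$ if and only if $\mathrm{rep}(\mathrm{LCA}_M(u,v))=v$, which follows immediately from how $T$ is read off from $M$. Invariant~(ii) is the delicate step, but the same case analysis sketched in the first paragraph applies: at each step $i$ the representatives $v_i$ and $p_i$ correctly encode the uniform-adjacency state of their classes to any $u$-class whenever no prior red edge was recorded in $H_{i-1}$, while any prior non-uniformity is preserved by~\eqref{eq:keep} and~\eqref{eq:transfer}. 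Combining the two invariants yields $\Delta(H_i)$ equal to the maximum red degree of $G_i$, so the width of $(T,\Prec)$ matches that of the given $d$-sequence of contractions, completing the proof.
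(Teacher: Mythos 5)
Your proof is correct and follows essentially the same approach as the paper's: in the forward direction, you keep the $<$-largest vertex of each class as its representative (your $\mathrm{rep}$ function on the merge tree $M$ is exactly the paper's choice rule~\eqref{eq:choise} for reusing $u_i$ or $v_i$ as $w_i$), read off $T$ by making the surviving representative the parent of the eliminated one, take $\prec$ to be the elimination order, and verify that the $H_i$ defined by~\eqref{eqn:all-lines} coincide with the red subgraphs of the resulting contraction sequence; the backward direction contracts along $T$ in $\prec$-order, exactly as the paper does. The merge-tree/LCA formalism and the explicit coupled invariants (i) and (ii) are a more detailed writeup of the same argument, filling in the case analysis that the paper leaves implicit when it asserts that taking $H_i$ to be the red subgraph of $G_i'$ yields a $d$-elimination sequence.
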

\begin{proof}
  Let $G$ be a graph and assume that $\tww(G)\leq d$.  By definition,
  there exists a $d$-sequence \cv{$G_0, G_1 , \dots, G_{n-1}$}\av{$G_0$, $G_1 , \dots, G_{n-1}$}, and each
  $G_i$, $i>0$, is obtained from $G_{i-1}$ by contracting two vertices
  $u_i$ and $v_i$, i.e., merging them into $w_i$, a new
  vertex. We slightly change contraction steps. Instead of introducing
  a new vertex $w_i$, we reuse one of the two vertices $u_i,v_i$ as
  $w_i$.  We use the ordering $<$ to decide which of the two vertices
  to reuse:
   \begin{equation}
     \label{eq:choise}
     w_i=\begin{cases}
       u_i & \text{if $u_i > v_i$},\\
       v_i & \text{otherwise}.
     \end{cases}
   \end{equation}
   This way, we obtain  a sequence $G_0', G_1' , \dots, G_{n-1}'$, with
   $V(G_i')\subseteq V(G)$, where each $G_i'$ is isomorphic to $G_i$.
   Since $V(G)=V(G_0') \supsetneq\dots\supsetneq V(G_{n-1}')$, this
   gives us a linear ordering $\prec$ of $V(G)$ in a natural way.  We
   obtain a contraction tree $T$ by taking $V(T)=V(G)$ and
   $E(T)=\SB u_iv_i\SM 1\leq i \leq n-1\SE$.  Because of~\eqref{eq:choise},
  the contraction tree satisfies the two
   conditions claimed in the statement of the theorem.  A $d$\hy
   elimination sequence $H_0,\dots,H_{n-1}$ is provided by taking
   $H_i$ as the subgraph of $G_i'$ formed by its red edges. Thus
   $(T,\Prec)$ is a twin-width decomposition of $G$ of width $\leq d$.

   Conversely, assume $(T,\Prec)$ is a twin-width decomposition of $G$
   of width $\leq d$. Let $H_0,\dots,H_{n-1}$ be the corresponding
   $d$\hy elimination sequence. We  turn the $d$\hy elimination
   sequence into a $d$-sequence by contracting pairs of vertices as
   indicated by~$T$. Hence $\tww(G)\leq d$.
 \end{proof}

\section{Preprocessing}\label{section:preprocessing}
In this section, we show how to decompose a given graph $G$ in
polynomial time into a collection $\Prime(G)$ of induced subgraphs of
$G$, such that $\tww(G)=\max_{H\in \Prime(G)}\tww(H)$. This
  decomposition can serve as a preprocessing step for  twin-width
  computation.

We require some
definitions. A \emph{module} of a graph $G$ is a nonempty set
$M\subseteq V(G)$ such that for any $x,y\in M$ and
$z\in V(G)\setminus M$ we have $xz\in E(G)$ if and only if
$yz\in E(G)$. A module $M$ is \emph{trivial} if $M=V(G)$ or
$\Card{M}=1$.  $M$ is a maximal module if it is not strictly contained
in any nontrivial module.  A graph is \emph{prime} if all its maximal
modules are trivial.  For every graph $G$, there exists a unique
partition $P_{\max}$ of $V(G)$ into maximal modules $M_1,\dots,M_s$,
and this partition can be found in linear time~\cite{CournierHabib94,McconnellSpinrad94}.  This partition gives rise
to the quotient graph $G/P_{\max}$ whose vertices are the maximal
modules of~$P$, and where two modules $M_i,M_j$, $i\neq j$, are joint
by an edge if and only if all the pairs $x_i\in M_i, x_j\in M_j$ are
joined by an edge in $G$. If we select for each module $M_i$ a
representative vertex $x_i\in M_i$, then the set $\{x_1,\dots,x_s\}$
of representatives induces a subgraph of $G$ that is isomorphic to
$G/P_{\max}$.  If $G$ and its complement graph $\overline{G}$ are
connected, then $G/P_{\max}$ is a prime graph~\cite{Gallai67,HabibPaul10}.  We recursively define the set
$\Prime(G)$ as follows:
\begin{enumerate}
\item If $G$ is disconnected, then $\Prime(G)$ is the union of the
  sets $\Prime(C)$ for all connected components $C$ of~$G$.
\item If $\overline G$ is disconnected, then  $\Prime(G)$ is the union of the
  sets $\Prime(\overline C)$ for all connected components $C$ of~$\overline G$.
\item If both $G$ and $\overline G$ are connected, then $\Prime(G)$ is the
  union of $\{G/P_{\max}\}$ and the sets $\Prime(G[M])$ for all
  nontrivial $M\in P_{\max}$.   
\end{enumerate}

\noindent The three cases above give rise to the \emph{modular decomposition}
of the graph $G$, represented as a rooted tree~\cite{HabibPaul10}.
The root of the tree is associated with $G$, the children of each vertex
are associated with the connected components (cases 1 and 2), or the
maximal modules (case 3) of the graph associated with their
parent. The leaves of the tree are in a 1-to-1 correspondence with the 
vertices of~$G$. 
\begin{theorem}\label{the:moddec}
  For every graph $G$ we have $\tww(G)=\max_{P\in \Prime(G)} \tww(P)$.
\end{theorem}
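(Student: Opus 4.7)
The plan is to prove the identity by induction on $|V(G)|$, following the three-way case split in the recursive definition of $\Prime(G)$. The inequality $\tww(G) \geq \max_{P \in \Prime(G)} \tww(P)$ is the easy direction: in case~1 every component $C$ is an induced subgraph of $G$, so Fact~\ref{fact:subgraph} gives $\tww(C)\leq \tww(G)$; in case~2 each $\overline{C}$ equals $G[V(C)]$, so again Fact~\ref{fact:subgraph} applies; in case~3 the quotient $G/P_{\max}$ is isomorphic to the subgraph induced by any system of representatives, and each $G[M]$ is an induced subgraph, so Fact~\ref{fact:subgraph} gives both bounds. The induction hypothesis then takes care of the recursive calls to $\Prime$.

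For the reverse direction $\tww(G) \leq \max_{P \in \Prime(G)} \tww(P)$, I would build an explicit $d$-sequence for $G$ out of $d$-sequences for the pieces, where $d$ is the maximum twin-width occurring on the right-hand side. For case~1, if $G$ has components $C_1,\dots,C_k$ with optimal sequences, we execute these sequences one component at a time; since there are no edges between distinct components, no contraction within $C_i$ affects any vertex outside $C_i$, and at the end we contract the $k$ resulting isolated vertices in any order, producing no red edges. Case~2 reduces to case~1 by Fact~\ref{fact:complement}, applied to $\overline{G}$, together with the observation $\overline{C}=G[V(C)]$.

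The real work is case~3. Given a nontrivial maximal module $M\in P_{\max}$, I would first run the optimal contraction sequence of $G[M]$ entirely inside $M$. The key point is the \emph{module property}: any two vertices $u,v\in M$ have identical neighborhoods outside $M$, so at every step the symmetric difference of the neighborhoods of two vertices being contracted is contained in $M$, hence no red edge is created to the outside. Therefore the red degree of any vertex during this phase is at most $\tww(G[M])$. After doing this for every nontrivial maximal module, each $M_i$ has been collapsed to a single vertex $w_i$, and the resulting trigraph is black and isomorphic to $G/P_{\max}$; the module property again guarantees that contractions across different modules never interfered, since $M_j$ remains a module throughout the processing of $M_i$. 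We then execute the optimal sequence for $G/P_{\max}$ on the $w_i$'s, incurring width $\tww(G/P_{\max})$. Combining with the induction hypothesis applied to each $G[M]$ yields $\tww(G)\leq\max(\tww(G/P_{\max}),\max_i\tww(G[M_i]))\leq \max_{P\in\Prime(G)}\tww(P)$.

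I expect the main obstacle to be the careful bookkeeping in case~3: one must verify not only that contractions inside a single module stay inside, but that after several modules have been collapsed the remaining modules still behave as modules (so later contractions do not produce red edges to already-collapsed representatives), and that the intermediate trigraphs really do have red degree bounded by the claimed value. This is where the module definition is used most essentially, and stating it as an invariant maintained throughout the procedure is the cleanest way to close the argument.
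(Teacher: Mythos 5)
Your proposal follows exactly the same inductive case split as the paper's proof, and the arguments coincide in all three cases; the paper's version of the easy direction invokes Fact~\ref{fact:subgraph} the same way, and its handling of the disconnected and co-disconnected cases is identical. The only difference is that you spell out more carefully why the module property guarantees that contractions inside a module $M$ produce no red edges to $V(G)\setminus M$ and why the other modules remain modules after collapsing one of them — a verification the paper compresses into a single sentence — so your write-up is correct and, if anything, slightly more complete.
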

\begin{proof}  
  Let $d=\max_{P\in \Prime(G)} \tww(P)$.  As observed above,
  $G/P_{\max}$ is isomorphic to an induced subgraph of $G$; by
  induction, this holds for all the graphs in $\Prime(G)$.  Because of
  Fact~\ref{fact:subgraph}, $\tww(G) \geq d$ follows.  

  For showing $\tww(P) \leq d$, we proceed by induction on
  $\Card{V(G)}=n$.  The statement is certainly true if $n=1$, since
  then $\Prime(G)=\{G\}$.  Now assume $n>1$. We distinguish several
  cases.

  Consider the case where $G$ is disconnected into components
  $C_1,\dots,C_r$. For each $1\leq i \leq r$ we have
  $\Prime(C_i)\subseteq \Prime(G)$, and so, by induction, we have
  \mbox{$\tww(C_i)\leq \max_{P\in \Prime(C_i)} \tww(P) \leq d$}. Thus, for
  each $C_i$ there is a $d$\hy sequence ending in a single-vertex
  graph. Using the contractions of these $d$\hy sequences we obtain a
  $d$\hy sequence for $G$, which ends in an edgeless graph that
  consists of $r$ isolated vertices. We can extend this $d$\hy
  sequence by contracting the isolated vertices pairwise in any order,
  obtaining eventually  a single-vertex graph, without generating any
  red edges.  Thus $\tww(G)\leq d$. The case where $\overline{G}$ is disconnected
  follows from the previous argument and Fact~\ref{fact:complement}.

  Finally, assume that $G$ and $\overline{G}$ are connected. Thus
  $G/P_{\max}$ is prime and is isomorphic to an induced subgraph
  $G'\in \Prime(G)$ of $G$. For each $M \in P_{\max}$,
  $\Prime(G[M])\subseteq \Prime(G)$. By induction hypothesis,
  $\tww(G') \leq d$ and $\tww(G[M]) \leq d$. We thus obtain a $d$\hy
  sequence for $G$ by putting together $d$\hy sequences for $G[M]$,
  $M\in P_{\max}$, and a $d$\hy sequence for $G'$, which contract
  first each $G[M]$ on a single vertex of $G'$, and then contract $G'$
  on a single vertex.  Hence $\tww(G)\leq d$.
\end{proof}

Theorem~\ref{the:moddec} provides the basis for a preprocessing phase
for twin-width computation. If the given graph $G$ is not prime, we
compute $\Prime(G)$ and determine the twin-width of all the graphs in
$\Prime(G)$. Since for a non-prime graph $G$, the graphs in
$\Prime(G)$ are smaller than $G$, it is more efficient to run a costly
twin-width algorithm on the the graphs in $\Prime(G)$ than on $G$
itself. Hence, the preprocessing can be highly beneficial for
non-prime graphs.

\section{SAT Encodings}
In this section, we present two SAT encodings for twin-width.
Assume, we are given a graph $G$ with vertices $v_1 \dots v_n$ and
an integer $d$. We will define a propositional formula $F(G, d)$ in
Conjunctive Normal Form (CNF) that is satisfiable if and only if
$\tww(G)\leq d$.  For the construction of $F(G,d)$, we use the
characterization of twin-width in terms of a twin-width decomposition
$(T, \Prec)$, as established in Theorem~\ref{the:leo}.  We use the
indices $1 \leq i,j,k,m \leq n$ and subsequently omit the upper and
lower bounds for readability.  Furthermore, we use the mapping
$\pos{v_i}$ to denote the position of $v_i$ in $\prec$. We give two
different encodings for $F(G, d)$.  \newcommand{\X}{\hspace{20mm}}

\subsection{Relative Encoding}

\begin{table*}[t]
  \centering
  \caption{The variables used in the relative encoding.}
  \label{tab:var-relative}
\begin{tabular}{@{}l@{\X}c@{\X}l@{}}
     \toprule
     Name & Range & Meaning \\
      \midrule
     $a_{i,j}$    & $1 \leq i < j \leq n$             & $v_i v_j \in E_k$ for some $k$ \\
     $c_{i,j}$    & $1 \leq i < j \leq n$                 & $v_i$ is contracted into $v_j$ \\
     $o_{i,j}$    & $1 \leq i < j \leq n$                 & $v_i \prec v_j$ \\
     $p_{i,j}$    & $1 \leq i < j \leq n$                 & $p_i = v_j$ \\
     $r_{i,j,k}$  & $1 \leq i,j \leq n$ and $j < k \leq n$     & $v_j v_k \in E(H_{\pos{v_i}})$ after eliminating $v_i$\\
    \bottomrule
  \end{tabular}
\end{table*}
In our first encoding, we use a relative ordering of the vertices, as
used in the treewidth encoding by Samer and~Veith~\cite{SamerVeith09}:
instead of encoding $\pos{v_i}$ directly, we encode for vertices
$v_i, v_j \in V(G)$, whether $\pos{v_i} < \pos{v_j}$ or not.
Table~\ref{tab:var-relative} shows the variables utilized in the
encoding.  For the ordering, we use $\binom{n}{2}$ variables $o_{i,j}$
with $i < j$, where $o_{i,j}$ is true if and only if $v_i \prec v_j$.
We subsequently use the shorthand $o^*_{i,j}$ where $o^*_{i,j}$ is
$o_{i,j}$ if $i < j$ and $\neg o_{j,i}$ if $i > j$.  We encode the
semantics by enforcing transitivity: for mutually distinct $i,j,k$ we
add the clauses
\[
\neg o^*_{i,j} \vee \neg o^*_{j,k} \vee o^*_{i,k}.
\]

Next, we encode the contraction tree $T$.  In view of
Theorem~\ref{the:leo}, we can assume that when $p_i$ is the parent of~$p_j$ in $T$, then $i<j$ (Condition 1), and $v_n$ is the root of $T$
(Condition 2).  Hence, we can use $\binom{n}{2}$ variables $p_{i,j}$
with $i < j$, where $p_{i,j}$ is true if and only if $p_i = v_j$.
We encode that every vertex, except the root, has exactly one
parent. For that, we utilize at-least-one constraints by
  adding  for each $i < n$ 
the clause $\bigvee_{i < j} p_{i, j}$ and
at-most-one constraints by adding for mutually distinct $i,j,k$ the clause
$\neg p_{i,j} \vee \neg p_{i, k}$.
Additionally, we ensure that
$v_i \prec v_j$ holds between a vertex~$v_i$ and its parent~$v_j$, by
adding for $i<j$ the clauses
\[\neg p_{i,j} \vee o^*_{i,j}.\]

So far we have encoded $\prec$ and $T$.  Next, we encode the elimination
sequence $H_0, \dots, H_n$ with two additional sets of variables.  We
take $n \binom{n}{2}$ variables $r_{i,j,k}$ with $j < k$, where
$r_{i,j,k}$ is true if and only if after eliminating $v_i$ it holds
that $v_j v_k \in E(H_{\pos{v_i}})$.  We also use $\binom{n}{2}$
auxiliary variables $a_{i,j}$ with $i < j$, where $a_{i,j}$ is true if
and only if there exists a $k$ such that $v_iv_j \in E(H_k)$.
We use shorthands $a^*$ and $r^*$ which are defined analogously to~$o^*$.

We encode the semantics of $a$ by adding, for all mutually distinct $i,j,k$, $i<j$, the clause
\[\neg o^*_{i,j} \vee \neg o^*_{i, k} \vee \neg r^*_{i, j, k} \vee a^*_{j, k}.\]
Furthermore, we encode the semantics of $r$ by encoding
Subsets~\eqref{eq:keep}--\eqref{eq:create-target} of $E(H_i)$
according to the definition given in Section~\ref{section:twin-width}.
Subsets~\eqref{eq:create-source} and~\eqref{eq:create-target} are encoded
by adding for $i < j$ and
$v_k \in (N_G(v_i)  \symdiff   N_G(v_j)) \setminus \{v_i, v_j\}$ the
clause
\[\neg p_{i,j} \vee \neg o^*_{i, k} \vee r^*_{i,j,k}.\]
Further, Subset~\eqref{eq:transfer} is encoded by adding, for mutually
distinct $i,j,k$, $i < j$, the clause
\[\neg p_{i, j} \vee \neg o^*_{i,k} \vee \neg a^*_{i,k} \vee
r^*_{i,j,k}.\]
Finally, we encode Subset~\eqref{eq:keep} by adding for mutually
distinct $i,j,k,m$, $k < m$ the clause
\[\neg o^*_{i, j} \vee \neg o^*_{j, k} \vee \neg o^*_{j, m} \vee \neg
r^*_{i,k,m} \vee r^*_{j,k,m}.\]
The $O(n^4)$  clauses required to
encode the Subset~\eqref{eq:keep}
dominate the size of the encoding.  Unfortunately, this is
unavoidable: without knowing $\pos{.}$, we have $O(n^2)$ possible
orderings of $v_i,v_j$, and for each such ordering we have $O(n^2)$
possible edges $v_k v_m$.

We enforce the upper bound $d$ by using \emph{cardinality
  constraints}: sets of clauses that encode the less-than constraints
with the help of auxiliary variables.  For each pair $v_i, v_j$ of
vertices, we limit the set
$\SB r^*_{i,j,k}\SM 1 \leq i,j,k \leq n \SE$ to at most $d$ true
values.  Therefore, $v_j$ has at most $d$ neighbors in
$H_{\pos{v_i}}$.  We achieve this by using the \emph{totalizer}
cardinality constraints, as they perform well with our
encoding~\cite{BailleuxEtAl03,RubenEtAl14}.

Since the construction of $F(G, d)$ closely follows the definitions
given in Section~\ref{section:twin-width}, we have the following
result.
\begin{theorem}
  Given a graph $G$ with $n$ vertices and an integer~$d$,
  we can construct in time polynomial in $n + d$ a propositional formula
  $F(G, d)$ which is satisfiable if and only if
  $\tww(G)\leq d$.
\end{theorem}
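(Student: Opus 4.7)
The plan is to verify two things: polynomial size of $F(G,d)$ in $n+d$, and the biconditional between satisfiability and $\tww(G)\leq d$. For the size bound, I would tally the variables ($O(n^2)$ of each kind $o,p,a,c$ plus $O(n^3)$ variables $r_{i,j,k}$) and the clauses: transitivity of $o$ contributes $O(n^3)$; the parent constraints contribute $O(n^3)$; the semantics of $a$ and of each subset \eqref{eq:keep}--\eqref{eq:create-target} contribute at most $O(n^4)$ clauses, with \eqref{eq:keep} dominating. The totalizer cardinality constraints on the $O(n^2)$ sets $\{r^*_{i,j,k}\}$ each add $O(n^2\log n)$ clauses with $O(n)$ auxiliary variables, so they are dominated by $O(n^4\log n)$ overall. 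Every object is constructible in time polynomial in $n+d$, so the encoding is polynomial.

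For correctness, I would invoke Theorem~\ref{the:leo} and argue each direction separately. For the forward direction, assume $\tww(G)\leq d$ and let $(T,\Prec)$ be a width-$\leq d$ twin-width decomposition satisfying the two conditions of Theorem~\ref{the:leo} when $<$ is taken to be the natural index ordering $v_1<\dots<v_n$. Set $o_{i,j}$ true iff $v_i\prec v_j$, $p_{i,j}$ true iff $v_j$ is the parent of $v_i$ in $T$, $r_{i,j,k}$ true iff $v_jv_k\in E(H_{\pos{v_i}})$, and $a_{i,j}$ true iff there exists some $k$ with $v_iv_j\in E(H_k)$. I would then check each clause family is satisfied: transitivity follows from $\prec$ being a linear order; the parent clauses follow from every non-root having a unique parent with $v_i\prec p_i$; and the clauses encoding \eqref{eq:keep}--\eqref{eq:create-target} are direct rewritings of the recursive definition of $E(H_i)$, where the $o^*$ and $r^*$ shorthand handle the case analysis on which index is smaller. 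The cardinality constraints are satisfied because the elimination sequence has width $\leq d$.

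For the converse, suppose $F(G,d)$ has a satisfying assignment $\tau$. Define $\prec$ from the $o$-variables (transitivity clauses and the at-least/at-most-one parent clauses together give a valid linear order), and let $T$ be the tree with edges $\{v_iv_j : \tau(p_{i,j})=1\}$, rooted at $v_n$. The parent-ordering clause ensures condition~1 of Theorem~\ref{the:leo}. I would then argue by induction on $\pos{\cdot}$ that the variables $r_{i,j,k}$ under $\tau$ must encode a superset of the true edges of $H_{\pos{v_i}}$: the implications \eqref{eq:create-source}, \eqref{eq:create-target}, \eqref{eq:transfer}, and \eqref{eq:keep} force the $r$-variables to take value true whenever the corresponding edge is created or propagated. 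The cardinality constraints then bound the red degree by $d$ at every step, so $(T,\Prec)$ is a width-$\leq d$ twin-width decomposition, and Theorem~\ref{the:leo} gives $\tww(G)\leq d$.

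The main technical obstacle will be the asymmetric indexing handled by the $o^*,a^*,r^*$ shorthand: I must carefully check that each clause family really covers every relevant ordering of $i,j,k,m$ and that the induction establishes \emph{only one} direction of the iff for $r$ (namely, $r$ is forced to be true whenever the edge is actually present), which is nonetheless sufficient because the cardinality bound on an \emph{overestimate} of the red neighborhood still proves the genuine red degree is $\leq d$. The other small subtlety is confirming that the $o$-variables together with transitivity and the antisymmetry built into the $o^*$ convention truly yield a strict total order on $\{v_1,\dots,v_n\}$.
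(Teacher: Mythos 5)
Your proof is correct and follows the same route the paper takes: the paper merely observes that $F(G,d)$ is a direct clause-by-clause transcription of the twin-width decomposition from Theorem~\ref{the:leo} and the recursive definition of $E(H_i)$, which is exactly what you verify, just more explicitly. Your observation that the clauses only force the $r$-variables upward (so $\tau$ encodes an overestimate of the red neighborhood, which suffices once the cardinality bound is imposed) is the right way to make the paper's implicit converse direction rigorous.
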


\subsection{Absolute Encoding}

We can reduce the number of clauses from $O(n^4)$ to $O(n^3)$ by
directly encoding the absolute position of each vertex in $\prec$.
We first give the general idea behind the adapted encoding and then
compare the two encodings.

We use $n (n-1)$ variables $o'_{i,j}$, where
$o'_{i,j}$ is true if and only if $\pos{v_j}=i$.  We encode the
semantics of these variables by assigning each vertex exactly one
position that is unique among all vertices.  With this modificantion, the
indices $i,j,k,m$ refer to positions $\pos{v_i}$, $\pos{v_j}$,
$\pos{v_k}$, $\pos{v_m}$, respectively, rather than the indices of 
$v_i, v_j, v_k, v_m$.  Therefore, the semantics of $r_{i,j,k}$ changes,
and $r_{i,j,k}$ is true if and only if there exists an edge $uv \in E(H_i)$
such that $j = \pos{u}$ and $k =\pos{v}$.

The main advantage of this modification is that $E(H_i)$ can be
succinctly expressed as $\neg r_{i-1,j,k} \vee r_{i,j,k}$, for
$i > 1$.  We also need fewer variables for $r$: since the vertex at
position $i$ is eliminated before the vertex at position $j$, for
$r_{i,j,k}$ it suffices to use indices in the range $i < j < k$.
Finally, we only need to consider the graphs $H_1,\dots,H_{n-d}$, as a
graph with $d$ vertices cannot have a twin-width higher than $d$.
This significantly reduces the number of variables and clauses.

\subsection{Comparison}

The absolute encoding's reduced size in comparison to the relative
encoding comes with the prize of making it more intricate to encode
the various required properties. Most obviously, the encoding of the
ordering with the  variables $o_{i,j}'$ is more complex than the
encoding of the ordering with the  variables~$o_{i,j}$.  Even
more impeding is the impossibility of succinctly encoding that the
parent of a vertex is lexicographically larger than the vertex itself.
Without this, we are left with many symmetries in the absolute encoding,
which unnecessarily increases the search space.   Encoding the edges is also
considerably more intricate in the absolute encoding: since we do not
know the value of $\pos{v_i}$ in advance, we have to encode for each
edge $v_iv_j \in E(G)$ that there is an edge from $\pos{v_i}$ to
$\pos{v_j}$, which requires $n (n-1)$ variables and $O(n^3)$ clauses.

To illustrate the encoding size, take as an example \Paley{73}, a graph with 73
vertices and 1314 edges and twin-width 36.  The relative encoding requires 30 million
clauses and 2.5 million variables, while the absolute encoding
requires only 2.5 million clauses and 0.3 million variables.

The aforementioned disadvantages of the absolute encoding severely
hinders its performance.  \Paley{73}'s twin-width is found by the
relative encoding within three hours, while the absolute encoding
fails to find the optimal result for a 13-vertex graph within four
hours.

While ill-suited for finding the optimal twin-width, the small size of
the absolute encoding makes it useful for computing upper bounds on
the twin-width of larger graphs.  The last unsatisfiable case
$F(G,\tww(G)-1)$ and the first satisfiable case $F(G,\tww(G))$ usually
take an order of magnitude longer to solve than other cases.
Particularly for $F(G,\tww(G)+i), i=1,2,\dots$ the solving time
decreases quickly. Thus, the absolute encoding can compute upper
bounds on the twin-width for graphs that are too large for the
relative encoding.

\section{Lower and Upper Bounds}\label{sec:bounds}
In this section, we describe a simple approach for deriving lower and
upper bounds for the twin-width of graphs. We use
these bounds for limiting the range for $d$ when running the SAT solver on
$F(G,d)$.

We first discuss the lower bound.
Let $r$ be a positive integer and $G$ a
graph with at least~$r$ vertices. We define the \emph{lower bound
  $\lb_r$ of order $r$} for $\tww(G)$ as the maximum degree of the
first $r+1$ graphs $H_0,\dots,H_{r-1}$ of any elimination sequence for
$G$. In particular, for $r=1$ we have 
\[
  \lb_1(G)=\min_{u,v\in V(G), u\neq v} \Card{N_G(u) \symdiff N_G(v)}.
\]
Clearly, $\lb_1(G) \leq \lb_2(G) \leq \dots \leq \lb_{n}(G) =\tww(G)$.
If~$r$ is a constant, then $\lb_r(G)$ can be computed in polynomial
time.

For obtaining an upper bound on the twin-width of a given graph $G$,
we propose a simple greedy algorithm.  The algorithm computes an
elimination ordering $\prec$ and a contraction tree $T$ step-by-step,
greedily choosing the next vertex~$v_i$ in the ordering. Assume we
have already computed the first $i$ vertices of the elimination
ordering $v_1,\dots,v_{i-1}$ and the corresponding sequence of graphs
$H_0,\dots,H_{i-1}$ with $V(H_{i-1})=\{v_i,\dots,v_n\}$. We choose the
next vertex $v_i \in V(H_{i-1})$ and the corresponding parent
$p_i\in \{v_{i+1},\dots,v_n\}$, $p_i < v_i$ in the lexicographic
ordering of the vertices, such that the degree of $p_i$ in $H_i$ is
minimized; in case of a tie, we take the lexicographically minimal
pair $(v_i,p_i)$. We add the edge $v_ip_i$ to the contraction
tree. The width of the resulting twin-width decomposition $(T,\Prec)$
gives the upper bound $\ub$ on the twin-width of~$G$. Our
implementation of the greedy heuristic uses caching to avoid computing
the degree of potential pairs $(v_i,p_i)$ over and over again.

\section{Experiments}
We computed the twin-width of several graphs using the relative encoding\footnote{Source code can be found at \url{https://github.com/ASchidler/twin_width}.
The results can be found at \url{https://doi.org/10.5281/zenodo.5564192}.}.
We implemented and run the encoding using Python~3.8.0
and PySAT~1.6.0\footnote{\url{https://pysathq.github.io}}.
As the SAT solver, we used  Cadical\footnote{\url{http://fmv.jku.at/cadical/}},
as it worked slightly better with the encoding than the other
solvers provided by PySAT\@. We used a computer
with an Intel Core i5-9600KF CPU running at 3.70~GHz, 32 GB RAM
and Ubuntu 20.04.

\subsection{Named Graphs}\label{section:named}

We computed the twin-width of several named graphs which are
well-known from the literature~\cite{MathWorld}.  The names of the
graphs either reflect their topology or their discoverer.  For most of
the considered graphs, the twin-width was not
known. Table~\ref{tab:famous} provides an overview of our results,
including lower and upper bounds as described in
Section~\ref{sec:bounds}. Preprocessing has no effect on the named
graphs, which all turned out to be prime (as one would expect, as
these graphs often provide a smallest example or counterexample for a
combinatorial property).

\begin{table*}[htb!]
    \centering
    \caption{Results for famous named graphs. For all graph not marked
      with *, the twin-width could be computed in at most five
      seconds.  $\lb_1$ gives the lower bound of order 1, $\ub$ gives
      the width of an elimination ordering computed by the greedy
      algorithm of Section~\ref{sec:bounds}. }
         \label{tab:famous}
\cv{\renewcommand{\X}{\hspace{10mm}}}
\av{\renewcommand{\X}{\hspace{8mm}}}
    \begin{tabular}{@{}l@{\X}r@{\X}r@{\X}r@{\X}r@{\X}r@{\X}r@{\X}r@{}}
      \toprule 
        Graph&$\Card{V}$&$\Card{E}$&$\lb_1$&$\tww$& $\ub$ &Variables&Clauses\\
      \midrule 
      Brinkmann&21&42&6&6&6&34526&150770\\
Chvátal&12&24&2&3&5&5611&18288\\
Clebsch&16&40&6&6&8&15510&64517\\
Desargues&20&30&4&4&5&28383&132636\\
Dodecahedron&20&30&4&4&4&26863&126244\\
Dürer&12&18&2&3&4&5347&18602\\
Errera&17&45&4&5&6&17720&75895\\
FlowerSnark&20&30&4&4&4&28383&119176\\
Folkman&20&40&2&3&3&10311&35761\\
Franklin&12&18&2&2&4&5347&16354\\
Frucht&12&18&2&3&3&5083&17573\\
Goldner&11&27&2&2&4&4067&11813\\
Grid $6\times 8$*&48&82&2&3&4&396751&3493676\\
Grötzsch&11&20&2&3&5&4287&13910\\
Herschel&11&18&2&2&4&4067&13590\\
Hoffman&16&32&2&4&5&14070&58051\\
Holt&27&54&6&6&7&79513&405925\\
Kittell&23&63&4&5&6&46161&171811\\
McGee&24&36&4&4&5&50087&238494\\
Moser&7&11&2&2&2&252&502\\
Nauru&24&36&4&4&5&50087&239051\\
\Paley{73}*&73&1314&36&36&64&2530300&21107035\\
Pappus&18&27&4&4&5&20399&89670\\
Peterson&10&15&4&4&4&3009&9388\\
Poussin&15&39&3&4&5&11571&31049\\
Robertson&19&38&6&6&6&25369&114592\\
Rook $6 \times 6$*&36&180&10&10&12&216499&1236368\\
Shrikhande&16&48&6&6&8&15510&64431\\
Sousselier&16&27&4&4&5&14070&51414\\
Tietze&12&18&2&4&4&5347&18628\\
Wagner&8&12&2&2&2&1418&3909\\
      \bottomrule
    \end{tabular}
       \end{table*}

 \begin{table*}[tbh]
      \caption{Results for Paley graphs.
     The twin-width agrees with the  lower bound of $(\Card{V}-1)/2$.
   Time shows the number of seconds it took to solve the SAT instance.}
   \label{tab:paley}
   \centering
   \renewcommand{\X}{\hspace{10mm}}
   \begin{tabular}{@{}l@{\X}r@{\X}r@{\X}r@{\X}r@{\X}r@{\X}r@{}}
     \toprule
     Name & $\Card{V}$ & $\Card{E}$ & ~~\tww & Variables & Clauses & Time [s] \\
     \midrule
\Paley{09}&9&18&4&2080&6176&$<$1\\
\Paley{13}&13&39&6&7962&29205&$<$1\\
\Paley{17}&17&68&8&19352&84652&$<$1\\
\Paley{25}&25&150&12&73948&408838&2.8\\
\Paley{29}&29&203&14&120406&715814&7.6\\
\Paley{37}&37&333&18&272166&1916941&21.4\\
\Paley{41}&41&410&20&384324&2030173&63.6\\
\Paley{49}&49&588&24&692352&4513244&210.2\\
\Paley{53}&53&689&26&893986&6282603&364.3\\
\Paley{61}&61&915&30&1406886&11437512&2396.8\\
\Paley{73}&73&1314&36&2530300&21107035&9934.3\\
     \bottomrule
   \end{tabular}

\end{table*}
Interestingly, the lower bound $\lb_1$ often coincides with the exact
twin-width.  One possible explanation is the high level of symmetry in
many of the graphs.  A particularly interesting class of symmetric
graphs are the \emph{strongly regular graphs}: these graphs are
usually 
parameterized by the tuple $(n, k, \lambda, \mu)$, where $n$ is the number of
vertices, $k$ is the degree of each vertex, and every pair of vertices
has either $\lambda$ common neighbors if they are adjacent, or share
$\mu$ neighbors otherwise.  For a strongly regular graph $G$
  with parameters $(n, k, \lambda, \mu)$ we can immediately determine
  the lower bound of order 1
  \[  \lb_1(G)=\min\{2(k-\mu), 2(k-\lambda-1)\}.\]
Examples of strongly regular graphs in Table~\ref{tab:famous} are \emph{Clebsch}~$(16,5,0,2)$, \emph{Peterson}~$(10,3,0,1)$,
\emph{Rook $n \times n$}~$(n^2, 2n{-}2, n{-}2, 2)$, and \emph{Shrikhande}~$(16,6,2,2)$.
A family of strongly regular graphs, the \emph{Paley graphs},
stick out due to their high twin-width in relation to their size.
For every prime power $n$ such that $n \equiv 1 \pmod{4}$,
the Paley graph on~$n$ vertices ($\Paley{n}$) is defined
and is strongly regular with parameters $k{=}(n{-}1)/2$, $\lambda{=}(n{-}5)/4$, $\mu{=}(n{-}1)/4$.
Further, Paley graphs are \emph{self-complementary}, i.e., $\Paley{n}$ and
$\overline{\Paley{n}}$ are isomorphic~\cite{GodsilRoyle01}.
With our relative SAT encoding, we could verify  that for Paley graphs with up
to 73 vertices, the lower bound of order 1
gives the exact twin-width, see Table~\ref{tab:paley}. We hope that by
analyzing the twin-width decomposition  provided by our encoding, one can
verify that $\tww(\Paley{n})=(n-1)/2$ holds in general.

Table~\ref{tab:paley} also highlights the quickly increasing size of
our relative encoding.
Despite the size, the solving times are comparatively short.
Although the encoding can compute the twin-width for $\Paley{73}$,
it often starts struggling for general graphs with more than 40 vertices.
This suggests that some graphs are considerably harder for our encoding than others,
independent of their size.

Two-dimensional \emph{grid graphs} are interesting for
  twin-width. They are known to have unbounded treewidth and
  clique-width, but it is easy to see that their twin-width is at most
  4~\cite{Bonet2020}. Interestingly, with our relative encoding, we found
that  smaller grid graphs, of size up to $8 \times 6$,  do have  twin-width~3.   We see it as an interesting challenge to determine the exact
  twin-width of all square grids. The width-3 decompositions that we
  found with our encodings do not suggest any obvious general pattern
  that could be generalized to all grid graphs, hence we still expect
  that at a certain size the width switches from 3 to~4.

\begin{figure}[tbh!]
  \centering
 \includegraphics[scale=0.5]{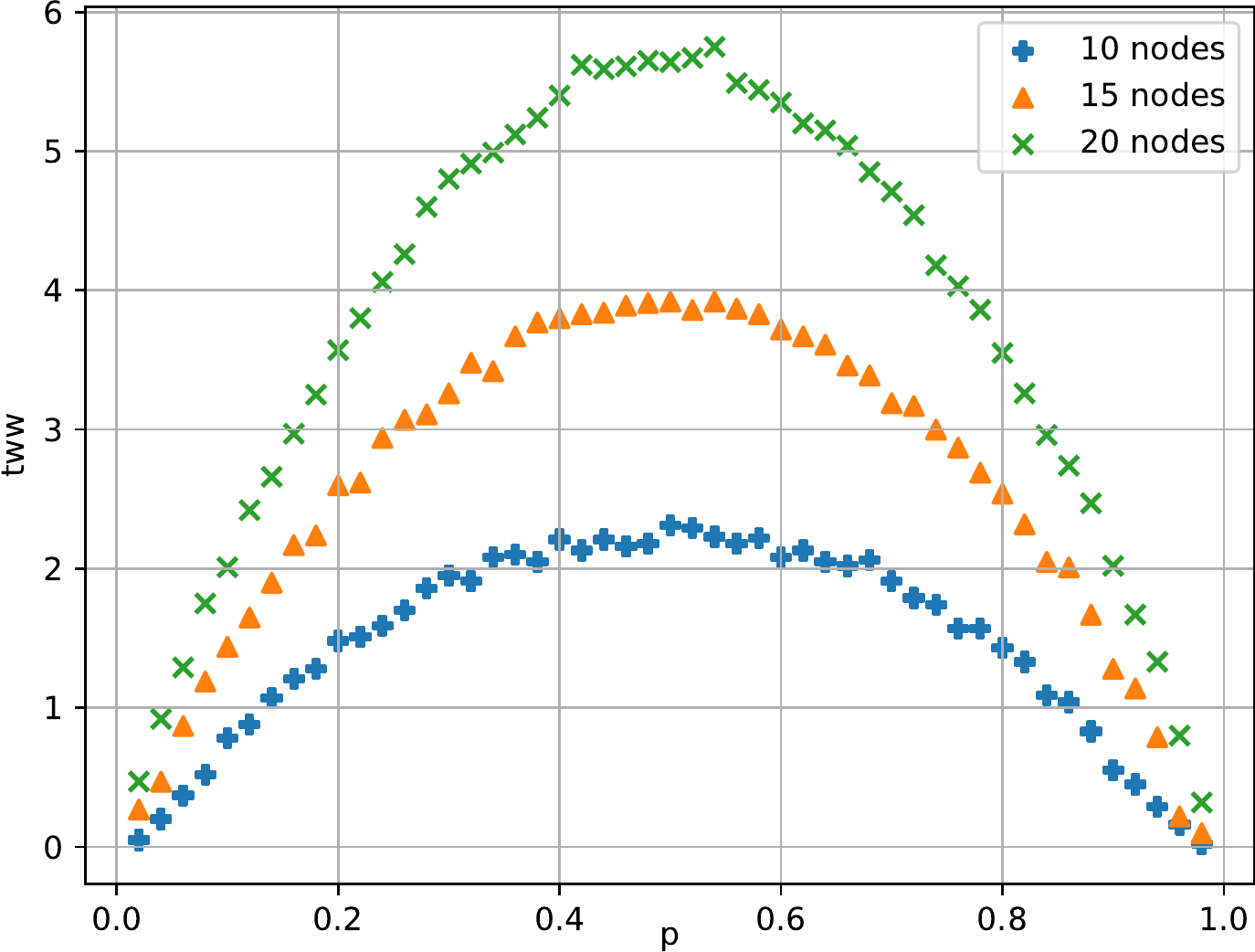}
  \caption{Twin-width for randomly generated graphs: each edge exists with probability $p$.
  Each point is the average over 100 graphs.}
  \label{fig:random}
\end{figure}
\subsection{Random Graphs}

We tested the twin-width on randomly generated graphs.
For this purpose, we created Erd\H{o}s-R\'{e}ny graphs $G(n, p)$,
where $\Card{V(G)}=n \in \{10, 15, 20\}$ and 
each  edge exists with probability~$p$, where $p$ takes values between $0$ to $1$ in $0.02$ increments.

The results in Figure~\ref{fig:random} show that the twin-width increases quickly with increasing graph size.
Furthermore, the vertical distance between the peaks is similar.
The symmetric shape is expected due to Fact~\ref{fact:complement}.

Many of the graphs can be simplified using the preprocessing discussed in Section~\ref{section:preprocessing}.

\subsection{The Twin-Width Numbers}

For every $d>0$, let $\tww_d$ be smallest integer such that there
exists a graph with $\tww_d$ many vertices of twin-width~$d$. We call $\tww_d$
the \emph{$d$-th twin-width} number. In contrast to other width measures like
treewidth, where similar numbers are easy to compute (the $d$-th
treewidth number is $d+1$), no uniform method is known for computing
the twin-width numbers. The situation is similar for clique-width,
where no uniform method is known either; Heule and Szeider~\cite{HeuleSzeider15}
computed the first few clique-width numbers.

The computation of twin-width numbers provides a challenge for any
exact method, as the search space grows quickly with each increment
of~$d$. However, with our encodings, run on prime graphs generated by
Nauty\footnote{\url{http://cs.anu.edu.au/people/bdm/}}~\cite{McKay2014}, we were able to identify the first few
twin-width numbers and give tight bounds for further ones. 
\begin{proposition}\label{prop:tww-numbers}
  The sequence of twin-width numbers starts with $4,5,8,9$; the fifth
  twin-width number is $11$ or $12$, the sixth twin-width number is at
  most~$13$.
\end{proposition}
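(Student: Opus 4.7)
The proof is inherently computational, combining the exhaustive enumeration of small prime graphs with SAT-based width computation; the role of Theorem~\ref{the:moddec} is to justify restricting attention to prime graphs, since any smallest graph realizing twin-width $d$ must itself be prime (otherwise a proper induced prime subgraph would already realize $d$ by Theorem~\ref{the:moddec}, contradicting minimality).

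The plan is to carry out, for each $d \in \{1,2,3,4\}$, two SAT-based verifications. First, for the upper bound $\tww_d \leq N$ (where $N \in \{4,5,8,9\}$ for $d=1,2,3,4$), I would exhibit an explicit witness graph~$G_d$ on $N$ vertices and solve both $F(G_d,d)$ (satisfiable, certifying $\tww(G_d)\leq d$) and $F(G_d,d-1)$ (unsatisfiable, certifying $\tww(G_d)\geq d$); each yields the exact twin-width $d$. Second, for the matching lower bound $\tww_d \geq N$, I would invoke Nauty to generate, up to isomorphism, every prime graph on at most $N-1$ vertices, and for each such $H$ solve $F(H,d-1)$; if every instance is satisfiable then no graph on fewer than $N$ vertices has twin-width $\geq d$. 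For $d = 5$, I would repeat the enumeration phase with bound $N = 10$ (establishing $\tww_5 \geq 11$) and exhibit a 12-vertex witness graph (establishing $\tww_5 \leq 12$). For $d = 6$, only the witness side is needed: exhibit a 13-vertex prime graph~$G_6$ with $F(G_6,6)$ satisfiable and $F(G_6,5)$ unsatisfiable, establishing $\tww_6 \leq 13$.

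The enumeration workload benefits enormously from the prime-graph restriction: the number of prime graphs on $n$ vertices is substantially smaller than the total count, and for $n \leq 10$ Nauty's \texttt{-c} (connected) output piped through a primality filter delivers a manageable set. The witness graphs for the upper bounds can be located by running the greedy heuristic of Section~\ref{sec:bounds} on candidate graphs and then confirming exactness via the SAT encoding; symmetry arguments and the strongly regular families discussed in Section~\ref{section:named} are natural starting points for small cases (for instance, the Peterson graph witnesses $\tww \leq 4$ on $10$ vertices, pinning down~$\tww_4$).

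The main obstacle is the size of the SAT workload for the $\tww_5 \geq 11$ step: every prime graph on up to $10$ vertices must be certified $\leq 4$, which means thousands of UNSAT calls at the hardest width level $d = 4$, precisely where the relative encoding struggles most (the last unsatisfiable case being the slowest, as noted in the comparison subsection). The upper-bound side is comparatively cheap since a single satisfying assignment settles each case. The reason the fifth twin-width number is left with a gap of one, and the sixth only upper-bounded, is exactly this computational asymmetry: ruling out an $11$-vertex graph of twin-width~$5$ would require certifying UNSAT of $F(H,4)$ for every prime $11$-vertex~$H$, and the analogous task for $d = 6$ at $12$ vertices is even further out of reach within the time budget.
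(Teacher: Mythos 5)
Your proposal is essentially the paper's own approach: exhaustive Nauty enumeration of prime graphs (justified by Theorem~\ref{the:moddec}, exactly as you argue), SAT-based verification of twin-width for each candidate, and small explicit witnesses for the upper bounds, with the remaining gap at $\tww_5$ and $\tww_6$ attributable to the blow-up of the search space beyond~10 vertices. One concrete slip: the Petersen graph has ten vertices, so it cannot pin down $\tww_4 = 9$; the witness the paper uses is $\Paley{9}$, and the twelve-vertex witness for $\tww_5 \leq 12$ is obtained by deleting a single vertex from $\Paley{13}$ (whose full thirteen-vertex form witnesses $\tww_6 \leq 13$). Also, in your final paragraph the bottleneck for the lower-bound step is phrased as ``thousands of UNSAT calls at~$d=4$,'' but certifying that every prime graph on at most $10$ vertices has twin-width $\leq 4$ requires SAT calls on $F(H,4)$; it is only if one wants the \emph{exact} twin-width of each such graph (as the paper does to populate its table) that one also needs the harder UNSAT calls at $d-1$. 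The paper additionally observes, via Fact~\ref{fact:complement}, that one may restrict to graphs with at most $\binom{n}{2}/2$ edges, roughly halving the workload; this is an optimization you omit but does not affect correctness.
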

For computing the twin-width numbers, we only need to consider graphs
$G$ with $\Card{E(G)}\leq \binom{n}{2}/2$, as by
Fact~\ref{fact:complement}, $\Card{E(G)}> \binom{n}{2}/2$ implies
$\Card{E(\overline{G})}\leq \binom{n}{2}/2$.  Further, according to
Theorem~\ref{the:moddec}, we only need to consider prime graphs. In
particular, since every prime graph $G$ and its complement graph
$\overline{G}$ are connected, we only need to consider connected
graphs. The results are shown in Table~\ref{tab:nauty}.

The preprocessing described in Section~\ref{section:preprocessing} can
be used for all graphs that are not prime.  We can see in
Table~\ref{tab:nauty} that there are many connected graphs that are
not prime, and thereby eligible for preprocessing.

Interestingly, for the first, second, and fourth twin-width number
$\tww_d$, there is a unique graph, up to isomorphism, with $\tww_d$ many
vertices and twin-width~$d$. For the third twin-width number,
there are five such graphs: $G_{8,3,i}, i = 1,\dots,5$.
$G_{8,3,3}$ is self-complementary; the other four form two complementary pairs.
In Figure~\ref{fig:twwn}, we display 
these graphs, together with an optimal $d$\hy sequence,
showing only one graph from each complementary pair.

The unique graph certifying $\tww_1=4$ is the path on~4 vertices
($P_4$). The unique graph certifying $\tww_1=5$ is the cycle on five
vertices ($C_5$).  The unique graph certifying $\tww_4=9$ is the graph
$\Paley{9}$ (see Section~\ref{section:named}).  In fact,
$C_5= \Paley{5}$, so also $\tww_2$ is certified by a Paley graph.
Further, if we remove any vertex from $\Paley{9}$, we obtain
$G_{8,3,3}$.  Similarly, we obtain $P_4$ by removing a vertex from
$\Paley{5}$.  Therefore, Paley graphs are related with all of the
first four twin-width numbers.
We could establish
with our method that among all graphs
with 10 vertices, there is no graph of twin-width 5, hence
$\tww_5\geq 11$. We could not check all graphs with 11 vertices, as
there are too many.
$\Paley{13}$ shows that $\tww_6\leq 13$. By deleting any single vertex
from $\Paley{13}$, its twin-width drops to 5. This implies that
$\tww_5\leq 12$, and so $11 \leq \tww_5 \leq 12$ as stated in Proposition~\ref{prop:tww-numbers}.

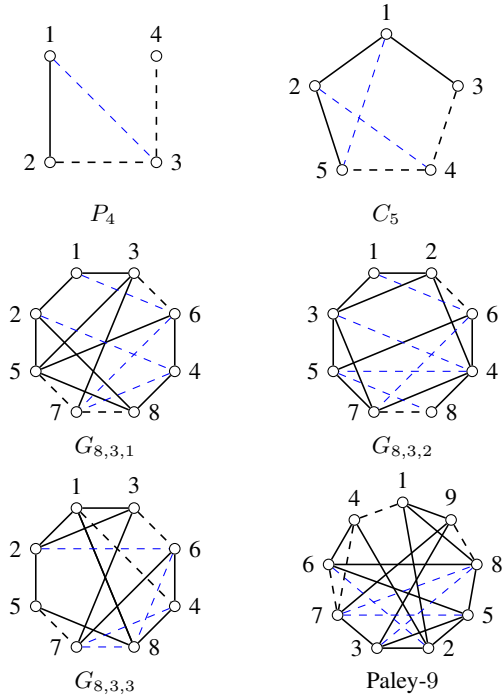
\begin{figure}[tbh]
  \centering
 
  \tikzset{
    lead/.style={
        execute at end picture={
            \coordinate (lower right) at (current bounding box.south east);
            \coordinate (upper left) at (current bounding box.north west);
        }
    },
    follow/.style={
        execute at end picture={
            \pgfresetboundingbox
            \path (upper left) rectangle (lower right);
        }
    }
}

\begin{tikzpicture}[scale=1.0, lead]
  \small
  \draw(0,-1.4) node (name) {$P_{4}$};
  \draw (0,0) node[coordinate] (origin) {}
  +(1.5*360/4: 1cm) node[var, label={above:1}] (1) {}
  +(2.5*360/4: 1cm) node[var, label={left:2}] (2) {}
  +(3.5*360/4: 1cm) node[var, label={right:3}] (3) {}
  +(4.5*360/4: 1cm) node[var, label={above:4}] (4) {};
  
  \draw [style=B] (2) to (1);
  \draw [style=BD] (2) to (3);
  \draw [style=BD] (3) to (4);
  \draw [style=GD] (1) to (3);
  
\end{tikzpicture}\hspace{12mm}%
\begin{tikzpicture}[lead, scale=1.0]
  \small
  \draw(0,-1.4) node (name) {$C_{5}$};
  \draw (0,0) node[coordinate] (origin) {}
  +(90+0*360/5: 1cm) node[var, label={above:1}] (1) {}
  +(90+1*360/5: 1cm) node[var, label={left:2}] (2) {}
  +(90+2*360/5: 1cm) node[var, label={left:5}] (5) {}
  +(90+3*360/5: 1cm) node[var, label={right:4}] (4) {}
  +(90+4*360/5: 1cm) node[var, label={right:3}] (3) {};
  \draw [style=B] (1) to (2);
  \draw [style=B] (1) to (3);
  \draw [style=B] (2) to (5);
  \draw [style=BD] (3) to (4);
  \draw [style=BD] (4) to (5);
  \draw [style=GD] (1) to (5);
  \draw [style=GD] (2) to (4);
\end{tikzpicture}\hfill\\%
\begin{tikzpicture}[lead,scale=1.0]
  \small
  \draw(0,-1.4) node (name) {$G_
    {8,3,1}$};
  \draw (0,0) node[coordinate] (origin) {}
  +(2.5*360/8: 1cm) node[var, label={above:1}] (1) {}
  +(3.5*360/8: 1cm) node[var, label={left:2}] (2) {}
  +(4.5*360/8: 1cm) node[var, label={left:5}] (5) {}
  +(5.5*360/8: 1cm) node[var, label={left:7}] (7) {}
  +(6.5*360/8: 1cm) node[var, label={right:8}] (8) {}
  +(7.5*360/8: 1cm) node[var, label={right:4}] (4) {}
  +(8.5*360/8: 1cm) node[var, label={right:6}] (6) {}
  +(9.5*360/8: 1cm) node[var, label={above:3}] (3) {};
  
  \draw [style=B] (1) to (2);
  \draw [style=B] (1) to (3);
  \draw [style=B] (2) to (8);
  \draw [style=B] (2) to (5);
  \draw [style=BD] (3) to (6);
  \draw [style=B] (3) to (7);
  \draw [style=B] (3) to (5);
  \draw [style=B] (6) to (4);
  \draw [style=B] (6) to (5);
  \draw [style=B] (4) to (8);
  \draw [style=BD] (5) to (7);
  \draw [style=B] (5) to (8);
  \draw [style=BD] (7) to (8);
  \draw [style=GD] (1) to (6);
  \draw [style=GD] (2) to (4);
  \draw [style=GD] (6) to (7);
  \draw [style=GD] (4) to (7);
\end{tikzpicture}\hspace{12mm}%
\begin{tikzpicture}[follow, scale=1.0]
  \small
  \draw(0,-1.4) node (name) {$G_{8,3,2}$};
  \draw (0,0) node[coordinate] (origin) {}
  +(2.5*360/8: 1cm) node[var, label={above:1}] (1) {}
  +(3.5*360/8: 1cm) node[var, label={left:3}] (3) {}
  +(4.5*360/8: 1cm) node[var, label={left:5}] (5) {}
  +(5.5*360/8: 1cm) node[var, label={left:7}] (7) {}
  +(6.5*360/8: 1cm) node[var, label={right:8}] (8) {}
  +(7.5*360/8: 1cm) node[var, label={right:4}] (4) {}
  +(8.5*360/8: 1cm) node[var, label={right:6}] (6) {}
  +(9.5*360/8: 1cm) node[var, label={above:2}] (2) {};

  \draw [style=B] (2) to (1);
  \draw [style=BD] (2) to (6);
  \draw [style=B] (2) to (3);
  \draw [style=B] (2) to (4);
  \draw [style=B] (1) to (3);
  \draw [style=B] (6) to (5);
  \draw [style=B] (6) to (4);
  \draw [style=B] (3) to (5);
  \draw [style=B] (3) to (7);
  \draw [style=B] (4) to (8);
  \draw [style=B] (4) to (7);
  \draw [style=B] (5) to (7);
  \draw [style=BD] (7) to (8);
  \draw [style=GD] (1) to (6);
  \draw [style=GD] (6) to (7);
  \draw [style=GD] (3) to (4);
  \draw [style=GD] (4) to (5);
  \draw [style=GD] (5) to (8);

\end{tikzpicture}\hfill\\%
\begin{tikzpicture}[follow, scale=1.0]
  \small
  \draw(0,-1.4) node (name) {$G_{8,3,3}$};
  \draw (0,0) node[coordinate] (origin) {}
  +(2.5*360/8: 1cm) node[var, label={above:1}] (1) {}
  +(3.5*360/8: 1cm) node[var, label={left:2}] (2) {}
  +(4.5*360/8: 1cm) node[var, label={left:5}] (5) {}
  +(5.5*360/8: 1cm) node[var, label={left:7}] (7) {}
  +(6.5*360/8: 1cm) node[var, label={right:8}] (8) {}
  +(7.5*360/8: 1cm) node[var, label={right:4}] (4) {}
  +(8.5*360/8: 1cm) node[var, label={right:6}] (6) {}
  +(9.5*360/8: 1cm) node[var, label={above:3}] (3) {};

    \draw [style=B] (1) to (8); 
  \draw [style=B] (2) to (3); 
  \draw [style=B] (2) to (5); 
  \draw [style=B] (2) to (1); 
  \draw [style=BD] (3) to (6);
  \draw [style=B] (3) to (7); 
  \draw [style=B] (3) to (1); 
  \draw [style=B] (5) to (8); 
  \draw [style=BD] (5) to (7);
  \draw [style=B] (1) to (8); 
  \draw [style=BD] (1) to (4);
  \draw [style=B] (6) to (4); 
  \draw [style=B] (6) to (7); 
  \draw [style=B] (4) to (8); 
  \draw [style=GD] (7) to (8);
  \draw [style=GD] (2) to (6);
  \draw [style=GD] (6) to (8);
  \draw [style=GD] (4) to (7);
\end{tikzpicture}\hspace{12mm}%
\begin{tikzpicture}[follow, scale=1.0]
  \small
  \draw(0,-1.4) node (name) {$\Paley{9}$};
  \draw (0,0) node[coordinate] (origin) {}
  +(90+0*360/9: 1cm) node[var, label={above:1}] (1) {}
  +(90+1*360/9: 1cm) node[var, label={above:4}] (4) {}
  +(90+2*360/9: 1cm) node[var, label={left:6}] (6) {}
  +(90+3*360/9: 1cm) node[var, label={left:7}] (7) {}
  +(90+4*360/9: 1cm) node[var, label={left:3}] (3) {}
  +(90+5*360/9: 1cm) node[var, label={right:2}] (2) {}
  +(90+6*360/9: 1cm) node[var, label={right:5}] (5) {}
  +(90+7*360/9: 1cm) node[var, label={right:8}] (8) {}
  +(90+8*360/9: 1cm) node[var, label={above:9}] (9) {};

  \draw [style=B] (2) to (5); 
  \draw [style=B] (2) to (1); 
  \draw [style=B] (2) to (3); 
  \draw [style=B] (2) to (4); 
  \draw [style=B] (5) to (6); 
  \draw [style=B] (5) to (8); 
  \draw [style=B] (5) to (3); 
  \draw [style=B] (1) to (9); 
  \draw [style=B] (1) to (8); 
  \draw [style=BD] (1) to (4);
  \draw [style=B] (3) to (9); 
  \draw [style=B] (3) to (7); 
  \draw [style=B] (4) to (6); 
  \draw [style=BD] (4) to (7);
  \draw [style=BD] (6) to (7);
  \draw [style=B] (6) to (8); 
  \draw [style=B] (7) to (9); 
  \draw [style=BD] (8) to (9);
  \draw [style=GD] (2) to (6);
  \draw [style=GD] (5) to (7);
  \draw [style=GD] (3) to (8);
  \draw [style=GD] (7) to (8);

\end{tikzpicture}

  \caption{Smallest graphs for given twin-width~$d$. The integer
    vertex labels  give
    a $d$\hy sequence, and the dashed edges give a contraction tree,  as in
    Figure~\ref{fig:wagnerTree}.}
  \label{fig:twwn}
\end{figure}

\begin{table}[htb!]
  \caption{The number of graphs, prime graphs, and prime graphs of a specific twin-width, with a specific number of vertices.}
  \label{tab:nauty}
  \centering
\begin{tabular}{@{}r@{~~}r@{~~~}r@{~~}r@{~~}r@{~~}r@{~~}r@{~~}r@{}}
\toprule
\multicolumn{3}{c}{} & \multicolumn{3}{c}{twin-width}\\
  \cmidrule{4-7}
$\Card{V}$&	 connected&	 prime&		1&	2&	3&	4\\
\midrule
4&	 3&	 1&		1&	0&	0&	0\\
5&	 11&	 4&		3&	1&	0&	0\\
6&	 73&	 26&		16&	10&	0&	0\\
7&	 618&	 260&		90&	170&	0&	0\\
8&	 8573&	 4670&		655&	4010&	5&	0\\
9&	 224875&	 145870&		4488&	137565&	3816&	1\\
10&	 11716571&	 8110356&		30318&	6144756&	1935226&	56\\
\bottomrule
\end{tabular}
\end{table}
\section{Conclusion}
We proposed the first practical approach to computing the exact
twin-width of graphs, utilizing the power of state-of-the-art
SAT-solvers. This allowed us to reveal the twin-width of several
important graphs. Our results provide the first step for showing
general twin-width bounds for infinite graph classes.  For instance,
our data  suggests $\tww(\Paley{n})=(n-1)/2$. Surprisingly, up
to $n=6$, the $n\times n$ grids have twin-width $3$. It would be
interesting to know if and when twin-width 4 is required.  Another
possible application of our results is the construction of gadgets for
showing the theoretical intractability of twin-width computation. Such
intractability is expected~\cite{BonnetEtal(1)20}, but no proof has
yet been found.

The two proposed SAT encodings' different performance
is impressive: the relative encoding benefits from symmetry breaking
and vastly outperforms the more succinct absolute encoding. Although
the relative encoding doesn't explicitly exploit the input graph's
symmetries,
it performs well on some highly symmetric graphs
like $\Paley{73}$.

We hope that our results provide new insights and stimulates further
theoretical investigations on twin-width.  We also hope that our
results provide a first step towards a practical use of twin-width. A
next step would be the implementation and testing of twin-width-based
dynamic programming algorithms like the algorithms for $k$\hy
Independent Set and $k$\hy Dominating Set proposed by Bonnet et
al.~\shortcite{BonnetEtal(3)20}, which are single exponential in the
twin-width.

\bibliographystyle{plainurl}
\bibliography{literature}
\end{document}